\newtheorem{cor}{Corollary}
\newtheorem{ppro}{Proposition}
\begin{document}

\title{Relay Control for Full-Duplex Relaying with Wireless Information and Energy Transfer}

\author{
\begin{center}
Hongwu~Liu,~\IEEEmembership{Member,~IEEE,}
Kyeong~Jin~Kim,~\IEEEmembership{Senior Member,~IEEE,}\\
H. Vincent Poor,~\IEEEmembership{Fellow,~IEEE,}
and~Kyung~Sup~Kwak,~\IEEEmembership{Member,~IEEE}
\end{center}
\thanks{H. Liu is with Shandong Jiaotong University, Jinan, China (e-mail: hong.w.liu@hotmail.com).}
\thanks{K. J. Kim is with Mitsubishi Electric Research Laboratories (MERL), Cambridge, MA, USA (e-mail: kyeong.j.kim@hotmail.com).}
\thanks{H. V. Poor is with  Department of Electrical Engineering, Princeton University, Princeton, NJ 08544.}
\thanks{K. S. Kwak is with Inha University, Incheon, Korea (e-mail: kskwak@inha.ac.kr).}
\thanks{This work was supported in part by SRF for ROCS, SEM, Shandong Provincial Natural Science Foundation, China (2014ZRB019XM), U. S. National Science Foundation under Grant ECCS-1343210, National Research Foundation of Korea-Grant funded by the Korean Government (Ministry of Science, ICT and Future Planning-NRF-2014K1A3A1A20034987).
}
}

\maketitle
\setcounter{page}{1}

\begin{abstract}
This study investigates wireless information and energy transfer for dual-hop amplify-and-forward full-duplex relaying systems. By forming energy efficiency (EE) maximization problem into a concave fractional program of transmission power, three relay control schemes are separately designed to enable energy harvesting and full-duplex information relaying.
With Rician fading modeled residual self-interference channel, analytical expressions of outage probability and ergodic capacity are presented for the maximum relay, signal-to-interference-plus-noise-ratio (SINR) relay, and target relay.
It has shown that EE maximization problem of the maximum relay is concave for time switching factor, so that bisection method has been applied to obtain the optimized value. By incorporating instantaneous channel information, the SINR relay with collateral time switching factor achieves an improved EE over the maximum relay in delay-limited and delay-tolerant transmissions. Without requiring channel information for the second-hop, the target relay ensures a competitive performance for outage probability, ergodic capacity, and EE. Comparing to the direct source-destination transmission, numerical results show that the proposed relaying scheme is beneficial in achieving a comparable EE for low-rate delay-limited transmission.
\end{abstract}

\begin{IEEEkeywords}
Energy harvesting, wireless power transfer, amplify-and-forward relay, full-duplex relay, relay gain control.
\end{IEEEkeywords}

\section{Introduction}

Energy harvesting (EH) has emerged as a promising approach to prolong the lifetime of energy constrained wireless communications \cite{Energy_harvesting,Energy_harvesting_relay,Energy_harvesting_cognitive}. Through harvesting energy from natural sources (e.g., solar, wind, thermoelectric effects or other physical phenomena), periodic battery replacement or recharging can be alleviated in green communications \cite{SWIPT_magazine, Energy_harvesting,Energy_harvesting_relay, Energy_harvesting_cognitive, Energy_harvesting_circuit}. However, EH from natural sources is not typically applicable for devices with small sizes or located in physically limited places \cite{SWIPT_magazine, Energy_harvesting_circuit}.
With the capability to harvest energy from ambient radio-frequency (RF) signals, simultaneous wireless information and energy transfer (SWIET), also known as simultaneous wireless information and power transfer (SWIPT), provides a more promising way for green wireless communications to function in environments with physical or other limitations \cite{Varshney_SWIET,Grover_Shannon_meets_Tesla,SWIPT_architecture,SWIPT_cellular,SWIPT_magazine, Energy_harvesting_circuit, SWIPT_protocol_DF,SWIPT_protocol_AF}.

The pioneering work on SWIET can be traced back to \cite{Varshney_SWIET} and \cite{Grover_Shannon_meets_Tesla}, where the fundamental tradeoff between capacity and energy was studied for point-to-point communications. Following the assumption that an ideal receiver is capable of observing information and extracting energy from the same received signal, SWIET has been extended to multi-antenna systems \cite{R_Beam_WIPT,DZJPSSP}, multiuser systems \cite{SWIPT_multiuser}, and bi-directional communication systems \cite{I_J_TEI}. However, as discussed in \cite{SWIPT_architecture}, a practical circuit for EH from the RF signal can hardly decode the carried information from the same signal. Therefore, two practical receiver architectures, namely, time switching (TS) and power splitting (PS), are proposed in \cite{SWIPT_architecture}. They are now widely adopted in various wireless systems, such as multiple-input multiple-output (MIMO) systems \cite{MIMO_B_SWIPT}, orthogonal frequency division multiplexing systems \cite{SWIPT_OFDM}, and cellular systems \cite{SWIPT_cellular}\footnote{Note that antenna switching can also be applied in the multiple-antenna case \cite{DZJPSSP}.} .

In parallel with the aforementioned studies that mainly deal with single-hop scenarios, employment of cooperative relays to facilitate RF EH and information transfer in energy-efficient cooperative or green networks has also drawn significant attention \cite{SWIPT_protocol_DF,SWIPT_protocol_AF,SWIPT_magazine,SWIET_MIMO_OFDM_AF}. Relay-based SWIET not only enables wireless communications over long distances or across barriers, but also keeps the energy-constrained relays active through RF EH. The authors of \cite{SWIPT_protocol_AF} designed and analyzed the TS and PS relaying protocols for amplify-and-forward (AF) relaying systems, and then extended the results to an adaptive TS relaying protocol \cite{SWIPT_protocol_adptive}. 
The throughput of the TS and PS relaying protocols for decode-and-forward (DF) relaying was investigated in \cite{SWIPT_protocol_DF}. Several power allocation schemes for EH relay systems with multiple source-destination pairs were studied in \cite{SWIPT_PA}. The outage and diversity of SWIET for cooperative networks with spatially random relays were investigated in \cite{SWIPT_SRR} and the distributed PS-based SWIET was designed for   interference relay systems \cite{SWIPT_DPS}. More recently, antenna switching and antenna selection for the SWIET relaying systems have been investigated in \cite{SWIPT_antenna_switch} and \cite{SWIPT_Antenna_select}, respectively. Nevertheless, all these studies are limited to half-duplex relaying (HDR) mode. Since the source-to-relay and relay-to-destination channels are kept orthogonal by either frequency division or time division multiplexing, significant loss of spectral efficiency occurs in the HDR mode. As an alternative, full-duplex relaying (FDR) has drawn considerable attention \cite{Inband_FD, Hybird_FD_HD, SWIPT_FD,Resource_FDR, Cognitive_FDR,energy_harvest:poor}. Since FDR requires only one channel for the end-to-end transmission, a significant improvement in the spectral efficiency over the HDR can be achieved.

So far, a few studies have been conducted for SWIET in FDR systems. In \cite{SWIPT_FD}, the throughput of the TS relaying protocol has been analyzed for SWIET FDR systems, in which the EH relay is operated cooperatively. In practice, since the relay node suffers severe self-interference from its own transmit signal, FDR operation is difficult to implement. For example, self-interference of more than 106 dB  has to be suppressed by a femto-cell FDR base-station to achieve the link signal-to-noise-ratio (SNR) equals to that of an HDR
counterpart \cite{Femto_SI}. For systems that require a higher transmission power, more self-interference suppression is needed \cite{Inband_FD}.
Therefore, MIMO has been employed at EH FDR node to suppress self-interference\cite{SWIPT_FDR_MIMO}.
By employing EH relay in the second time phase for the conventional two-phase AF HDR systems, a self-interference immunizing full-duplex relay node was proposed in \cite{SWIPT_FD_selfenergy}, which can transmit information and extract energy simultaneously via separated transmit and receive antennas. Note that all the above researches of EH FDR are conducted to maximize throughput, whereas energy efficiency (EE) has not been investigated.
Another challenging problem for SWIET is to determine the EH parameters, i.e., the TS factor and PS factor for the TS and PS relaying protocols, respectively. Determining of the TS factor affects not only the relay-harvested energy but also the effective relaying transmission time in the TS relaying protocol. Compared with the PS relaying protocol, the TS relaying protocol is more practical because of its simplicity.
With statistical channel state information (CSI), the numerical optimizations of the TS factor  have been presented in \cite{SWIPT_protocol_AF} and \cite{SWIPT_FD}.
In delay-limited and delay-tolerant transmissions, the instantaneous CSI can also be applied to optimize the EH parameter, as well as the instantaneous CSI-aided transmission power control in the FDR systems \cite{Cognitive_FDR, Resource_FDR}.

Motivated by these previous studies, we focus on wireless information and energy transfer for an energy-constrained dual-hop FDR system, in which the source node has a reliable green power supply, whereas the TS-based AF relay node is powered via EH from the source-emitted RF signal\footnote{This setting has a number of potential applications in green wireless network, e.g., when the intermediate node is energy-selfish or lacks an energy-supply, or the direct link from the source to destination is blocked by a barrier while the relay has to be placed on a site without a fixed power supply.}.
Since self-interference deteriorates FDR performance seriously, relay gain control with SWIET becomes more critical than ever before. Compared with existing works, some distinct features of our study are highlighted here.
In \cite{SWIPT_FD_selfenergy}, the effective information transmission time is the same as that of HDR systems. Therefore, the spectral efficiency improvement is minimal compared with that of the FDR systems. In our study, we focus on improving EE by applying full-duplex information relaying, i.e., the relay receives and forwards the source information to the destination simultaneously.
In \cite{SWIPT_FD}, the authors designed wireless information and energy transfer for FDR systems with an aim to maximize throughput. For full-duplex radios, experiments have shown that residual self-interference (RSI) contains specular component and RSI channel is characterized by Rician fading with its $K$-factor depending on the amount of direct-path suppression \cite{Full_duplex_experiment}. Similar trends for RSI characters can also be observed in \cite{Full_duplex_radios}. Since EE is a major concern in the design of SWIET FDR, we focus on the impact of this experimental verified RSI channel model on EE performance by optimizing TS factor, relay gain, and source transmission power.

In this paper, three relay control schemes, namely, the maximum relay, signal-to-interference-plus-noise-ratio (SINR) relay, and target relay, are investigated for the three transmission schemes of instantaneous transmission, delay-limited transmission, and delay-tolerant transmission. The contributions of this study are summarized as follows:
\begin{itemize}
\item Considering that FDR RSI channel follows Rician fading, we present analytical expressions for the throughput for the three relay control schemes in different transmissions, such as instantaneous transmission, delay-limited transmission, and delay-tolerant transmission. Specifically, analytical expressions for the outage probability are derived for these three relay control schemes in delay-limited transmission, while analytical expressions for the ergodic capacity are derived for the maximum relay and SINR relay in delay-tolerant transmission.
\item We show that EE maximization with given TS factor and relay gain is a concave fractional program of transmission power, so that EE monotonically decreases with an increasing transmission power in instantaneous and delay-tolerant transmissions for practical transmission power\footnote{Transmission power is not very small, e.g., not less than 0 dBm.}. In delay-limited transmission, we show that a relatively higher outage probability, as well as correspondingly a relatively lower transmission power, may achieve a higher EE. When the source-destination direct link is not available, the numerical results also show that SWIET FDR is beneficial in achieving EE for low-rate delay-limited transmission. By enhancing the performances of outage probability, ergodic capacity, effective relaying time, and corresponding EE, the SINR relay is shown to be more competitive than the maximum relay. Without requiring CSI of the second-hop, the target relay also achieves a competitive outage probability, ergodic capacity, and EE. We also reveal that the worst performances of outage probability, ergodic capacity, and EE are obtained by the relay placed midway between the source and destination.
\item It has shown that the EE maximization problem of the maximum relay is concave for all possible values of the TS factor. Thus, the low-complexity bisection method can be applied to obtain the optimized TS factor. The closed-form TS factors have been designed for the SINR relay and target relay to maximize the e-SINR and achieve a target e-SINR, respectively. With the obtained closed-form TS factors for the SINR relay and target relay, instantaneous CSI has been employed to improve the EE in delay-limited and delay-tolerant transmissions.
\end{itemize}

The rest of this paper is organized as follows. Section II describes the system model of the considered FDR system and formulates the EE optimization problem. Section III presents the three relay control schemes. The analytical results of the throughput are presented in Section IV. Section V presents numerical results and discusses the system performances of our proposed relay control schemes. Finally, Section VI summarizes the contributions of our study.

\section{System Model}

In this paper, we consider a wireless dual-hop FDR system, in which a source node intends to transfer its information to the destination node. Due to physical isolations or environmental limitations between the source and destination, a cooperative relay is employed to assist information transmission from the source to the destination. The cooperative relay is assumed to be an energy-selfish or energy-constrained device such that it needs to harvest energy from the source-emitted RF signal to forward the source information to the destination. For simplicity of implementation, the AF relaying scheme and TS transceiver architecture are chosen at the relay node. The channel coefficients from the source to the relay and from the relay to the destination are denoted by $h_1$ and $h_2$, respectively. $h_1$ and $h_2$ are assumed to be frequency non-selective and quasi-static block-fading, following a Rayleigh distribution. The means of the exponential random variables $|h_1|^2$ and $|h_2|^2$ are denoted by $\lambda _1$ and $\lambda _2$, respectively. According to the experimental results of \cite{Full_duplex_experiment}, the RSI at a full-duplex node contain specular components such that the RSI channel coefficient can be modeled as a Rician random variable $h_0$. Since both experiment and analysis have shown that RSI cannot be eliminated completely, this paper considers only the case of $|h_0|^2>0$ \cite{Full_duplex_experiment, RSI_after_cancellation}.

\begin{figure}[htbp]
\vspace{0.12in}
\begin{center}
\includegraphics[width=2in]{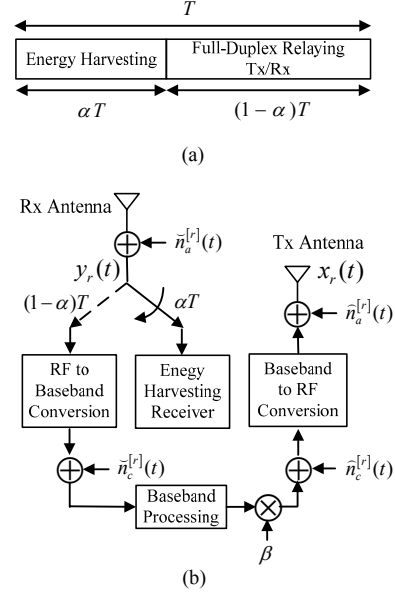}
\vspace{-0.09in}
\caption{(a) Illustration of the key parameters for EH and FDR at the relay. (b) Block diagram of the relay transceiver.}
\end{center}
\vspace{-0.15in}
\label{fig:1}
\end{figure}

The framework of the TS relaying protocol is illustrated in Fig. 1(a), in which each time block $T$ is divided into two phases.
Denoting the TS factor by $\alpha$ ($0<\alpha<1$), we use the first phase assigned with a duration of $\alpha T$ for energy transfer from the source to the relay. The second phase assigned with the remaining duration of $(1-\alpha)T$ is used for full-duplex information relaying via the dual-hop channel.
The relay-received RF signal in the two time phases are sent to the EH receiver and full-duplex transceiver, respectively, as illustrated in Fig. 1(b).
Since the relay node does not transmit during the first time phase, self-interference is not introduced during the EH period. The harvested energy at the relay is given by
\begin{eqnarray}
{E} = {\eta \mathcal{L}_1 {d_1^{-m}} P_s  |h_1|^2} \alpha T,
\end{eqnarray}
where $\eta$ is the energy conversion efficiency depending on the rectifier circuit, $\mathcal{L}_1 $ is a path loss factor, $d_1$ is the distance between the source and relay and it has been normalized with respect to the reference distance, $m$ is the path loss exponent, and $P_s$ is the source transmission power. In this paper, the path loss factor is defined as $\mathcal{L}_1 \triangleq \mathcal{A}_s \mathcal{L}$, where $\mathcal{A}_s$ is the source transmit antenna gain and $\mathcal{L}$ is the measured path loss at the reference distance.
Given the relay-harvested energy, the maximum transmission power at the relay is expressed as
\begin{eqnarray}
P_{_{E}} = \tfrac{{{E}}}{{(1 - \alpha )T}} = {{\mu \mathcal{L}_1 d_1^{-m} {P_s}|h_1|^2}}, \label{eq:P_max}
\end{eqnarray}
where $\mu  \triangleq \frac{{\alpha \eta }}{{1 - \alpha }}$.

In the FDR mode, the relay concurrently receives the signal $y_r(t)$ and transmits the signal $x_r(t)$ on the same frequency.
As depicted in Fig. 1(b), the full-duplex transceiver down-converts the received RF signal to the baseband, processes the baseband signal, and up-converts the processed baseband signal. In Fig. 1(b),  $\mathord{\buildrel{\lower3pt\hbox{$\scriptscriptstyle\smile$}}
\over n} _a^{_{[ r ]}}(t)$ and $\mathord{\buildrel{\lower3pt\hbox{$\scriptscriptstyle\frown$}}
\over n} _a^{_{[ r ]}}(t)$ are the narrow-band Gaussian noises introduced by the receive and transmit antennas, respectively.
In addition,
$\mathord{\buildrel{\lower3pt\hbox{$\scriptscriptstyle\smile$}}
\over n} _c^{_{[ r ]}}(t)$ and $\mathord{\buildrel{\lower3pt\hbox{$\scriptscriptstyle\frown$}}
\over n} _c^{_{[ r ]}}(t)$ are the baseband additive white Gaussian noises (AWGNs) caused by down-conversion and up-conversion, respectively \cite{SWIPT_architecture}. For simplicity, the equivalent baseband noise composing both $\mathord{\buildrel{\lower3pt\hbox{$\scriptscriptstyle\smile$}}
\over n} _a^{_{[ r ]}}(t)$ and $\mathord{\buildrel{\lower3pt\hbox{$\scriptscriptstyle\frown$}}
\over n} _a^{_{[ r ]}}(t)$ is modeled by the zero mean AWGN $n_a^{_{[ r ]}}(t)$ with the variance $\sigma _a^2$, and the equivalent baseband noise composing both
$\mathord{\buildrel{\lower3pt\hbox{$\scriptscriptstyle\smile$}}
\over n} _c^{_{[ r ]}}(t)$ and $\mathord{\buildrel{\lower3pt\hbox{$\scriptscriptstyle\frown$}}
\over n} _c^{_{[ r ]}}(t)$ is modeled by the zero mean AWGN $n_c^{_{[ r ]}}(t)$ with the variance $\sigma _c^2$. Therefore, the overall AWGN at the relay node can be modeled as the zero mean AWGN $n_r(k) \triangleq n_a^{_{[ r ]}}(k) + n_c^{_{[ r ]}}(k)$ with the variance $\sigma _r^2 \triangleq \sigma _a^2 + \sigma _c^2$.
At the relay node, the sampled baseband signal is given by
\begin{eqnarray}
{y_r}(k) = \sqrt {{\tfrac{\mathcal{L}_1 {P_s}}{d_1^m}}} h_1 s(k) + \mathcal{A}_r h_0 x_r(k) + {n_r}(k), \label{eq:y_r}
\end{eqnarray}
where $k$ denotes the symbol index, $s(k)$ is the sampled and normalized information signal from the source, $x_r(k)$ is the sampled signal of $x_r(t)$, and $\mathcal{A}_r$ is the relay transmit antenna gain. Note that the second term on the right side of \eqref{eq:y_r} is the RSI.

Using the harvested energy, the relay amplifies the received signal by a relay gain $\beta$. Then, the transmitted signal at the relay can be expressed as
\begin{eqnarray}
{x_r}(k) = \sqrt{\beta} {y_r}(k- \tau), \label{eq:x_r}
\end{eqnarray}
where $\tau \ge 1$ is the processing delay at the relay.
By recursively substituting \eqref{eq:y_r} and \eqref{eq:x_r}, we have the following expression for the  transmitted signal at the relay:
\begin{eqnarray}
{x_r}(k) &\!\!=\!\!& \sqrt \beta  \sum\limits_{l = 1}^\infty  {{(\mathcal{A}_r h_0 \sqrt \beta  )}^{l - 1}} \nonumber  \\
 &\!\!\!\!& \times \left( {\sqrt {\tfrac{\mathcal{L}_1   P_s}{{d_1^m}}} h_1s(k - l\tau ) + {n_r}(k - l\tau )} \right).  \label{eq:x_r_recursive}
\end{eqnarray}
The sampled received signal at the destination is given by
\begin{eqnarray}
{y_d}(k) = \sqrt{\tfrac{\mathcal{L}_2}{{d_2^m}}}h_2{x_r}(k) + n_d(k), \label{eq:y_d}
\end{eqnarray}
where $d_2$ is the distance between the relay and  destination, which is also normalized with respect to the reference distance,  $\mathcal{L}_2 \triangleq \mathcal{A}_r \mathcal{L}$ is a path loss factor,  and $n_d(k)$ is the AWGN with the zero mean and the variance $\sigma _d^2$. By substituting \eqref{eq:x_r_recursive} into \eqref{eq:y_d}, we have
\begin{eqnarray}
\!\!\!\!{y_d}(k) & \!\!\!\!\!=\!\!\!\!\!& \sqrt {\tfrac{{{\mathcal{L}_1 \mathcal{L}_2 P_s}\beta }}{{d_1^md_2^m}}}  h_1 h_2 \sum\limits_{l = 1}^\infty  {{{(\mathcal{A}_r h_0\sqrt \beta  )}^{l - 1}}s(k - l\tau )}  \nonumber \\
\!\!\!\!&\!\!\!\!\!\!\!\!\!\! & + \sqrt {\tfrac{\mathcal{L}_2   \beta }{{d_2^m}}} h_2 \!\! \sum\limits_{l = 1}^\infty  {{{(\mathcal{A}_r h_0\sqrt \beta  )}^{l - 1}}{n_r}(k - l\tau )}  + {n_d}(k). \label{eq:y_d_recursive}
\end{eqnarray}

In the following, we derive the end-to-end signal power under the condition of employing cooperative non-oscillatory relays. By assuming that all the signal and noise samples are mutually independent, we calculate the relay transmission power $P_r = \mathbb{E} \{ |{x_r}(k){|^2}\}$ from \eqref{eq:x_r_recursive} as
\begin{eqnarray}
P_r  &=& {\beta}\sum\limits_{l = 1}^\infty  {{{(\mathcal{A}_r |h_0{|^2}{\beta})}^{l - 1}}\left( { \mathcal{L}_1  d_1^{-m}{P_s}|h_1{|^2} + \sigma _r^2} \right)} \nonumber  \\
 &=& {\beta}\frac{{\mathcal{L}_1 d_1^{-m}{P_s}|h_1{|^2} + \sigma _r^2}}{{1 - \mathcal{A}_r |h_0{|^2}{\beta}}}.   \label{eq:P_r}
\end{eqnarray}
To prevent oscillation and guarantee a finite relay transmission power, the relay gain is limited by
\begin{eqnarray}
{\beta} < \frac{1}{{\mathcal{A}_r|h_0{|^2}}}.  \label{eq:beta2<1/f2}
\end{eqnarray}
The actual relay transmission power should be less than or equal to the maximum transmission power, i.e.,
\begin{eqnarray}
P_{r} \le P_{_{E}}. \label{eq:P_r<P_max}
\end{eqnarray}
When \eqref{eq:P_max} and \eqref{eq:P_r} are substituted into \eqref{eq:P_r<P_max}, the relay gain under the maximum relay transmission power is limited by
\begin{eqnarray}
\beta  \le \frac{\mu }{{1 + \gamma _{{\rm{SR}}}^{ - 1} + \mu \mathcal{A}_r |h_0|^2}}, \label{eq:beta2<beta_max}
\end{eqnarray}
where the channel SNR of the first-hop link is defined as  $\gamma _{_{\rm SR}} \triangleq {\mathcal{L}_1 {P_s |h_1|^2}\over {d_1^m \sigma _r^2}}$.
At symbol index $k$, the destination node can employ any standard detection procedure to decode the desired signal $s(k-\tau)$, and the rest of the received signal components act as interference and noise. Again based on the assumption that signal and noise are independent of each other, the received signal power at the destination is calculated from \eqref{eq:y_d} as ${\mathbb{E}} \{|y_d(k)|^2 \}= \mathcal{L}_2 d_2^{-m} |h_2|^2 {\mathbb{E}}\{|x_r(k)|^2\}+ \sigma _d^2$, which can be further evaluated as follows, comprising of the desired signal power, RSI power, and noise power:
\begin{eqnarray}
{\mathbb{E}} \{|y_d(k)|^2 \}
 &\!\!\!\!=\!\!\!\!& \underbrace {\mathcal{L}_1 \mathcal{L}_2 {d_1^{-m} d_2^{-m}}{\beta}{P_s}|h_1|^2|h_2|^2}_{{\rm{desired\;signal\;power}}} \nonumber \\
 &\!\!\!\!\!\!\!\!&\!\!\!\!\!\!\!\!\!\!\!\!\!\!\!\!+ \underbrace {\left( {{\mathcal{L}_1 {d_1^{-m}}}{P_s}|h_1|^2 + \sigma _r^2} \right){\mathcal{L}_2 }{{d_2^{-m}}}\beta|h_2|^2\tfrac{{\mathcal{A}_r |h_0|^2{\beta}}}{{1 - \mathcal{A}_r|h_0|^2{\beta}}}}_{{\rm{RSI\; power}}} \nonumber \\
 &\!\!\!\!\!\!\!\!&\!\!\!\!\!\!\!\!\!\!\!\!\!\!\!\! + \underbrace {{\mathcal{L}_2 }{{d_2^{-m}}}\beta|h_2|^2\sigma _r^2 + \sigma _d^2}_{{\rm{noise\; power}}}.   \label{eq:y_d_power}
\end{eqnarray}
Based on \eqref{eq:y_d_power}, the e-SINR at the destination is given by
\begin{eqnarray}
\gamma _{_{\rm SRD}}  = \frac{{{\gamma _{_{\rm{SR}}}}{\gamma _{_{\rm{RD}}}}}}{{{\gamma _{_{\rm{SR}}}}/\beta  + {\gamma _{_{\rm{RD}}}} + ({\gamma _{_{\rm{SR}}}} + 1){\gamma _{_{\rm{RD}}}}\frac{{\mathcal{A}_r|h_0|^2}}{{1/\beta  - \mathcal{A}_r|h_0|^2}}}},  \label{eq:SINR}
\end{eqnarray}
where the channel SNR of the second-hop link is defined as ${\gamma _{_{\rm{RD}}}} \triangleq \frac{{{P_{_E}}}}{\mu }\frac{\mathcal{L}_2 {|h_2|^2}}{{d_2^m\sigma _d^2}} = \frac{{\mathcal{L}_1 \mathcal{L}_2 {P_s}|{h_1}|^2|h_2|^2}}{{d_1^md_2^m\sigma _d^2}}$.

In this paper, the EEs of instantaneous transmission, delay-limited transmission, and delay-tolerant transmission are considered. The EE is defined as the number of bits successfully conveyed to the destination per Joule consumed energy and is given by
\begin{eqnarray}
\eta _{_{\rm eff}}(P_s, \alpha, \beta) = {B R_{_{\rm M}}(P_s, \alpha, \beta)}/{P_s},  \label{eq:energy_efficiency}
\end{eqnarray}
where $B$ is the system bandwidth and $R_{_{\rm M}}(P_s, \alpha ,\beta )$ is the throughput of the transmission scheme $\rm M$. Let $\rm{M} = {\rm I}, {\rm DL}$, and ${\rm DT}$ represent instantaneous transmission, delay-limited transmission, and delay-tolerant transmission, respectively. Their corresponding throughputs are respectively given by
\begin{subequations}
\begin{eqnarray}
{R_{_{\rm{I}}}}(P_s, \alpha ,{\beta}) &=& (1 - {\alpha }){\log _2}(1 + {\gamma }), \label{eq:R_I}\\
{R_{_{{\rm{DL}}}}}(P_s, \alpha ,{\beta}) &=& (1 - {\alpha })(1 - {P_{{\rm{out}}}})R, \label{eq:R_DL}\\
{R_{_{\rm{DT}}}}(P_s, \alpha ,{\beta}) &=& (1 - \alpha ){C_{_{\rm{E}}}}, \label{eq:R_DT}
\end{eqnarray}
\end{subequations}
where $P_{\rm out} =\Pr(\gamma < \gamma _{\rm th})$ is the outage probability, $R = \log _2(1+\gamma _{\rm th})$ is the fixed transmit rate, $C_{_{\rm E}} = {\mathbb{E}} \{ \log _2(1+\gamma )  \}$ is the ergodic capacity, and $\gamma _{\rm th}$ is an e-SINR threshold for correct data detection at the destination. The design goal is to maximize the EE by optimizing the control parameters $\{P_s, \alpha, \beta\}$.
The optimal control parameters $\{P_s ^*,  \alpha ^*, \beta ^*\}$ can be obtained by solving the following optimization problem:
\begin{subequations}
\begin{eqnarray}
\!\!\!\!& &\!\!\!\!\{P_s^*,  {\alpha ^*},{\beta ^*}\}  = \arg \mathop {\max }\limits_{\alpha ,\beta } {\eta_{_{\rm eff}}}(P_s, \alpha ,\beta ) \label{eq:P1}\\
\!\!\!\!& &{\rm subject~to}~~~\!0<P_s<P_{\max},  \\
\!\!\!\!& &~~~~~~~~~~~~~~~ 0<\alpha<1, \label{eq:a_constaint} \\
\!\!\!\!& &~~~~~~~~~~~~~~~ 0< \beta \le \tfrac{\mu }{{1 + \gamma _{_{\rm{SR}}}^{ - 1} + \mu |h_0|^2}}. \label{eq:b_constaint}  \end{eqnarray}
\end{subequations}
Since ${\eta_{_{\rm eff}}}$ has a very complicated express with respect to $\{P_s, \alpha, \beta\}$ and the information of $\{\alpha, \beta\}$ is unknown, the optimal control parameters $\{P_s ^*,  \alpha ^*, \beta ^*\}$ can hardly get a closed-form solution. By verifying the second derivative of $\gamma _{_{\rm SRD}}$ with respect to $P_s$, it can be shown $\gamma _{_{\rm SRD}}(P_s)$ is a concave function when $\beta$ satisfies the non-oscillation condition \eqref{eq:beta2<1/f2}. Since $\log_2(1+\gamma_{_{\rm SRD}})$ and $1-\Pr (\gamma_{_{\rm SRD}} < \gamma _{\rm th})$ in (15) are monotonically increasing with $\gamma _{_{\rm SRD}}$, $\log_2(1+\gamma_{_{\rm SRD}})$ and $1-\Pr (\gamma_{\max} < \gamma _{_{\rm SRD}})$ are also concave for all possible values of $P_s$. Thus, the numerator of ${\eta_{_{\rm eff}}} (P_s)$ is a concave function for any $\{ {\alpha}, \beta \}$ satisfying \eqref{eq:a_constaint} and \eqref{eq:b_constaint}.
Note that the denominator of ${\eta_{_{\rm eff}}} (P_s)$, $P_s$, is convex respect to $P_s$. Then, for any $\{ {\alpha}, \beta \}$ satisfying \eqref{eq:a_constaint} and \eqref{eq:b_constaint}, the optimization problem \eqref{eq:P1} becomes a concave fractional programing of $P_s$, which can be solved exactly by Dinkelbach's method \cite{Non_linear_FP}. The optimization problem can be reduced to optimizing $\{\alpha, \beta\}$, so that the system design becomes mathematically tractable.
Since $P_{\rm out}$ and $C_{\rm E}$ generally require only statistics of CSI, the appearance of the term $(1-\alpha)$ in \eqref{eq:R_DL} and \eqref{eq:R_DT} intrinsically  implies that $\alpha$ is optimized statistically for the delay-limited and delay-tolerant transmissions. However, as we will show lately, instantaneous CSI can also be employed to optimize $\alpha$ for delay-limited and delay-tolerant transmissions.
In the following, all the three relay control schemes and their EE-related performances are presented for the considered transmission schemes, respectively.

\section{Relay Control Scheme}

In this section, we investigate how to compute the relay gain and  TS factor for the three relay control schemes, namely, maximum relay, SINR relay, and target relay. In designing the relay control scheme, we assume that perfect knowledge of CSI is available.

\subsection{Maximum Relay}
A simple and popular relay control scheme is involved in setting the relay gain at the maximum relay transmission power \cite{Hybird_FD_HD, Gain_control_FDR, SWIPT_FD}. In contrast to these works, our study considers the maximum relay targeting at maximizing the EE in the presence of RSI. For a given $\alpha$ in the range $(0, 1)$, the relay-harvested energy and maximum relay transmission power are determined, so that
the relay gain is given by the following according to \eqref{eq:beta2<beta_max}:
\begin{eqnarray}
\beta _{\max } = \tfrac{{\mu {\gamma _{_{\rm{SR}}}}}}{{1+ {\gamma _{_{\rm{SR}}}} + \mu {\gamma _{_{\rm{SR}}}} \mathcal{A}_r|h_0|^2  }},  \label{eq:beta_max}
\end{eqnarray}
which guarantees that \eqref{eq:beta2<1/f2} holds. Substituting \eqref{eq:beta_max} into \eqref{eq:SINR}, the e-SINR achieved by the maximum relay is given by
\begin{eqnarray}
{\gamma _{\max }} = \tfrac{{\mu {\gamma _{_{\rm{SR}}}}{{\gamma }_{_{\rm{RD}}}}}}{{{\gamma _{_{\rm{SR}}}} + \left( {\mu {\gamma _{_{\rm{SR}}}}\mathcal{A}_r|h_0|^2 + 1} \right)\left( {\mu {{\gamma }_{_{\rm{RD}}}} + 1} \right)}}. \label{eq:SINR_max}
\end{eqnarray}
Now, the instantaneous throughput can be evaluated by substituting \eqref{eq:SINR_max} into \eqref{eq:R_I}.

The appearance of the term $\mathcal{A}_r |h_0|^2$ in \eqref{eq:beta_max} and \eqref{eq:SINR_max} indicates that the system performance of the maximum relay is affected by RSI. Since $h_0$ is a Rician variable, $|h_0|^2$ follows the non-central chi-squared distribution and its probability density function (PDF) is given by
\begin{eqnarray}
{f}_{_{|h_0|^2}}(w)\triangleq   \tfrac{(K+1)e^{-K}}{\sigma_0^2} e^{-\frac{(K+1)w}{\sigma_0^2}} I_0\left( 2 \sqrt{\tfrac{K(K+1)w}{\sigma_0^2}} \right),  \label{eq:pdf_f}
\end{eqnarray}
where $\sigma_0^2 = \mathbb{E} \{ |h_0|^2 \}$ is the average RSI channel gain and $K$ is the Rician $K$-factor.

\begin{ppro}
The outage probability achieved by the maximum relay is given by
\begin{subequations}
\begin{eqnarray}
\!\!\!\!\!\!\!\!\!\!\!\!\!\!{P_{{\rm{out}}}} &\!\!\!\!=\!\!\!\!&  1 \!-\! \frac{1}{\lambda _1}\int\limits_{w=0}^{\frac{{1}}{{\mu {\gamma _{{\rm{th}}}}}}} {\int\limits_{z = \frac{d}{c}}^\infty \!\! {{ {f}_{|h_0|^2}(w) e^{ - \left( {\frac{z}{{{\lambda _1}}} + \frac{{az + b}}{{(c{z^2} - dz){\lambda _2}}}} \right)}} {\rm{d}}z} } {\rm{d}}w  \label{eq:proposition1} \\
\!\!\!\!\!\!\!\!\!\!\!\!\!\!&\!\!\!\! \mathop  \approx \limits^{(a)}  \!\!\!\!&1 -  \int\limits_{0}^{{{{1 } \over {\mu {\gamma _{{\rm{th}}}}}}}} { {{{f}_{|h_0|^2}}} (w) \rho {K_1}(\rho ){e^{ - \frac{d}{{c{\lambda _1}}} }}} {\rm{d}}w,
\end{eqnarray}
\end{subequations}
where $a \triangleq \mathcal{L}_1  {P_s}d_1^md_2^m\sigma _d^2{\gamma _{\rm th}}(1 + \mu \mathcal{A}_r w)$, $b \triangleq d_1^{2m}d_2^m\sigma _r^2\sigma _d^2{\gamma _{\rm th}}$, $c \triangleq \mathcal{L}_1^2 \mathcal{L}_2 P_s^2\mu (1 - \mu {\gamma _{\rm th}}\mathcal{A}_r w)$, $d \triangleq \mathcal{L}_1 \mathcal{L}_2 {P_s}d_1^m\sigma _r^2\mu {\gamma _{\rm th}}$, $\rho \triangleq \sqrt{\frac{4a}{c\lambda _1 \lambda _2}} $, $K_1(\cdot)$ is the first-order modified Bessel function of the second kind \cite[Eq. (8.432)]{Table_Integrals}, and the approximation in the step (a) is achieved in the region of high SNR values.
\end{ppro}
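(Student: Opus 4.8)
The plan is to compute $P_{\rm out} = \Pr(\gamma_{\max} < \gamma_{\rm th})$ directly from the closed form \eqref{eq:SINR_max} for $\gamma_{\max}$, by conditioning on the three independent channel gains and integrating them out one at a time. First I would fix $w \triangleq |h_0|^2$ and $z \triangleq |h_1|^2$ and solve the non-outage event $\gamma_{\max} \ge \gamma_{\rm th}$ for the only remaining random quantity, $|h_2|^2$, which enters solely through $\gamma_{_{\rm RD}}$. Since both $\gamma_{_{\rm SR}} = \mathcal{L}_1 P_s z/(d_1^m\sigma_r^2)$ and $\gamma_{_{\rm RD}} = \mathcal{L}_1\mathcal{L}_2 P_s z|h_2|^2/(d_1^m d_2^m\sigma_d^2)$ are explicit, the denominator of \eqref{eq:SINR_max} is positive under the non-oscillation condition \eqref{eq:beta2<1/f2}, so cross-multiplying preserves the inequality, and collecting the $|h_2|^2$ terms reduces $\gamma_{\max}\ge\gamma_{\rm th}$ to a linear inequality in $|h_2|^2$.

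The delicate part is the sign bookkeeping that produces the integration limits. Isolating $|h_2|^2$ introduces the factor $(1-\mu\gamma_{\rm th}\mathcal{A}_r w)$ together with the bracket $\gamma_{_{\rm SR}}(1-\mu\gamma_{\rm th}\mathcal{A}_r w)-\gamma_{\rm th}$, and non-outage is attainable only when both are positive. The first positivity condition caps $w$ at $1/(\mu\gamma_{\rm th})$ (with $\mathcal{A}_r$ normalized to one, consistent with the constraint \eqref{eq:b_constaint}), which is exactly the upper limit of the $w$-integral; the second is equivalent to $z>d/c$, the lower limit of the $z$-integral. Whenever either condition fails no value of $|h_2|^2$ achieves $\gamma_{\max}\ge\gamma_{\rm th}$, so those regions are in outage with probability one and contribute nothing. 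On the admissible region the event becomes $|h_2|^2 > (az+b)/(cz^2-dz)$, and because $|h_2|^2$ is exponential with mean $\lambda_2$ its complementary CDF supplies the factor $\exp\{-(az+b)/[(cz^2-dz)\lambda_2]\}$. Weighting by the exponential density $\lambda_1^{-1}e^{-z/\lambda_1}$ of $|h_1|^2$ and the Rician density $f_{|h_0|^2}(w)$ and taking the complement yields \eqref{eq:proposition1} exactly; the remaining labor is the algebraic check that the four constants $a,b,c,d$ in the statement are precisely what the substitutions for $\gamma_{_{\rm SR}},\gamma_{_{\rm RD}}$ generate after clearing denominators.

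For step (a) I would evaluate the inner $z$-integral asymptotically. Shifting $t = z-d/c$ maps the lower limit to $0$, factors $e^{-d/(c\lambda_1)}$ out of $e^{-z/\lambda_1}$, and rewrites $cz^2-dz=t(ct+d)$ and $az+b=at+ad/c+b$. In the high-SNR regime the constants scale as $a\sim P_s$, $c\sim P_s^2$, $d\sim P_s$ while $b$ is constant, so $ad/c$ and $b$ are negligible relative to the leading term and $ct+d\approx ct$ away from the origin; hence the exponent $(at+ad/c+b)/[t(ct+d)\lambda_2]$ collapses to $a/(c\lambda_2 t)$. The surviving integral $\int_0^\infty \exp\{-t/\lambda_1 - a/(c\lambda_2 t)\}\,\mathrm{d}t$ is a standard Bessel form, $\int_0^\infty e^{-qt-p/t}\,\mathrm{d}t = 2\sqrt{p/q}\,K_1(2\sqrt{pq})$ \cite{Table_Integrals}, and with $q=1/\lambda_1$ and $p=a/(c\lambda_2)$ it delivers exactly $\rho K_1(\rho)$ after multiplying by $\lambda_1^{-1}$, where $\rho=\sqrt{4a/(c\lambda_1\lambda_2)}$. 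Reinstating the factor $e^{-d/(c\lambda_1)}$ and the outer $w$-integral reproduces the approximation in step (a).

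The hard part will be the sign analysis in the second paragraph: one must track the positivity of $(1-\mu\gamma_{\rm th}\mathcal{A}_r w)$ and of the $z$-bracket simultaneously to justify that the limits are $[0,1/(\mu\gamma_{\rm th})]$ and $[d/c,\infty)$ and that everything outside is certain outage. The asymptotic justification of step (a)---controlling the neglected lower-order terms uniformly enough near $t=0$, where the integrand is singular---is the only other point requiring care, but it is routine once the $P_s$-scaling of $a,b,c,d$ is made explicit.
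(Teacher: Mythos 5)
Your proposal is correct and follows essentially the same route as the paper's Appendix A: reduce the outage event to a linear inequality in $|h_2|^2$ given $|h_0|^2$ and $|h_1|^2$, carry out the sign analysis that yields the limits $w\in[0,1/(\mu\gamma_{\rm th})]$ and $z\in[d/c,\infty)$ (with certain outage elsewhere), apply the exponential complementary CDF, and obtain step (a) via the high-SNR simplification of the exponent followed by the identity $\int_0^\infty e^{-\beta x-\alpha/(4x)}{\rm d}x=\sqrt{\alpha/\beta}\,K_1(\sqrt{\alpha\beta})$. Your substitution $t=z-d/c$ is just a rescaling of the paper's $x=cz-d$, and your observation about the $\mathcal{A}_r$ normalization in the $w$-limit is a fair reading of the paper's own convention.
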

\begin{proof}
A proof is provided in Appendix A.
\end{proof}

\begin{ppro}
The ergodic capacity achieved by the maximum relay is given by
\begin{eqnarray}
C_{_{\rm E}} &\!\!\!\!\!\!=\!\!\!\!\!\!& \tfrac{1}{{\ln 2}} \int\limits_0^\infty f_{_{|h_0|^2}}(w) G_{4,2}^{1,4}\left( {\left. {\tfrac{{\mathcal{L}_1 \mathcal{L}_2 \mu {P_s}{\lambda _1}{\lambda _2} (1+\mu \mathcal{A}_r w)}}{{d_1^md_2^m\sigma _d^2}}} \right|{}_{1,0}^{0,0,1,1}} \right) {\rm d}w \nonumber \\
&\!\!\!\!\!\!\!\! & - \tfrac{e^{-K}}{K \ln{2}}\sum\limits_{n = 0}^\infty  {\tfrac{{{{\left( {\frac{{d_1^md_2^m\sigma _d^2K(K + 1)}}{{{\mathcal{L}_1 \mathcal{L}_2} {\mu ^2}{P_s}{\lambda _1}{\lambda _2} \mathcal{A}_r\sigma _0^2}}} \right)}^{n + 1}}}}{{{{(n!)}^2}}}} \nonumber \\
&\!\!\!\!\! \!\!\!\!\!& \times
 G_{1,4}^{4,1}\left( {\left. {{\tfrac{d_1^md_2^m\sigma _d^2(K + 1)}  {{\mathcal{L}_1 \mathcal{L}_2} {\mu ^2}{P_s}{\lambda _1}{\lambda _2} \mathcal{A}_r \sigma _0^2}}} \right|{}_{0, - 1 - n, - 1 - n, - n}^{ - 1 - n}} \right),
\end{eqnarray}
where $G_{m,n}^{p,q}(x)$ is the Meijer G-function \cite[Eq. (9.301)]{Table_Integrals}.
\end{ppro}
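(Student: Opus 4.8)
The plan is to evaluate $C_{_{\rm E}}=\mathbb{E}\{\log_2(1+\gamma_{\max})\}$ directly from \eqref{eq:SINR_max}, first reducing the e-SINR to a form that depends on the fading only through $\gamma_{_{\rm RD}}$ and $|h_0|^2$, and then applying Meijer $G$-function integral identities. Because $\gamma_{_{\rm SR}}$ and $\gamma_{_{\rm RD}}$ both contain $|h_1|^2$, an exact average would require the full joint law of $(\gamma_{_{\rm SR}},\gamma_{_{\rm RD}})$; the fact that the claimed expression involves $\lambda_1\lambda_2$ only through their product signals that the high-SNR regime of step (a) in Proposition 1 is used once more. Concretely, I would drop the additive unit terms in the denominator of \eqref{eq:SINR_max} while keeping the bare $\gamma_{_{\rm SR}}$ term, i.e. replace $(\mu\gamma_{_{\rm SR}}\mathcal{A}_r|h_0|^2+1)(\mu\gamma_{_{\rm RD}}+1)$ by $\mu^2\gamma_{_{\rm SR}}\mathcal{A}_r|h_0|^2\gamma_{_{\rm RD}}$, which cancels $\gamma_{_{\rm SR}}$ and gives $\gamma_{\max}\approx \mu\gamma_{_{\rm RD}}/(1+\mu^2\mathcal{A}_r|h_0|^2\gamma_{_{\rm RD}})$. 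The stated equality is then understood in this high-SNR sense.

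Next, since $1+\gamma_{\max}$ factors as a ratio, I would write $\log_2(1+\gamma_{\max})=\log_2\!\big(1+\mu(1+\mu\mathcal{A}_r|h_0|^2)\gamma_{_{\rm RD}}\big)-\log_2\!\big(1+\mu^2\mathcal{A}_r|h_0|^2\gamma_{_{\rm RD}}\big)$, which splits $C_{_{\rm E}}$ into the two expectations that become the two terms of the statement. For each I would use that $\gamma_{_{\rm RD}}=\frac{\mathcal{L}_1\mathcal{L}_2 P_s}{d_1^m d_2^m\sigma_d^2}|h_1|^2|h_2|^2$ is a scaled product of two independent exponentials, with PDF $\frac{2}{\bar\gamma}K_0\!\big(2\sqrt{\gamma/\bar\gamma}\big)$, where $\bar\gamma=\frac{\mathcal{L}_1\mathcal{L}_2 P_s\lambda_1\lambda_2}{d_1^m d_2^m\sigma_d^2}$. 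Writing $\ln(1+x)=G_{2,2}^{1,2}\!\big(x\big|_{1,0}^{1,1}\big)$ and $K_0(2\sqrt{x})=\tfrac12 G_{0,2}^{2,0}\!\big(x\big|_{0,0}\big)$ and invoking the product integral \cite[Eq. (7.811)]{Table_Integrals}, the $\gamma_{_{\rm RD}}$-integral in each term collapses to a single $G_{4,2}^{1,4}$ whose argument is $\mu\bar\gamma(1+\mu\mathcal{A}_r|h_0|^2)$ and $\mu^2\mathcal{A}_r|h_0|^2\bar\gamma$, respectively.

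The two terms then differ in how the residual $|h_0|^2$-average is handled. In the first term $|h_0|^2$ enters the $G$-function argument additively through $(1+\mu\mathcal{A}_r|h_0|^2)$, which admits no further closed form, so I would leave it as the one-dimensional integral $\frac{1}{\ln 2}\int_0^\infty f_{_{|h_0|^2}}(w)\,G_{4,2}^{1,4}(\cdots)\,{\rm d}w$ shown. In the second term $|h_0|^2$ enters purely multiplicatively, so the $w$-average can be carried out analytically: expanding the Bessel factor of \eqref{eq:pdf_f} via $I_0(z)=\sum_{n\ge0}(z/2)^{2n}/(n!)^2$ turns $f_{_{|h_0|^2}}$ into a series of Gamma-type weights $w^n e^{-(K+1)w/\sigma_0^2}$; integrating each against the logarithm (again through its Meijer-$G$ representation), then completing the $\gamma_{_{\rm RD}}$-integral and applying the reflection $G_{p,q}^{m,n}(z)=G_{q,p}^{n,m}(1/z)$, produces the series of $G_{1,4}^{4,1}$ functions with the reciprocal argument $\frac{(K+1)d_1^m d_2^m\sigma_d^2}{\mathcal{L}_1\mathcal{L}_2\mu^2 P_s\lambda_1\lambda_2\mathcal{A}_r\sigma_0^2}$. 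The $1/(n!)^2$ weights descend from the $I_0$ series, while the argument powers $(\cdots)^{n+1}$ arise from the order-$n$ parameter shifts in the $G$-function; collecting the $e^{-K}$, $1/K$ and $1/\ln 2$ constants gives the stated coefficient.

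The main obstacle is the Meijer-$G$ bookkeeping in the second term: matching parameter vectors after combining the logarithm, the $K_0$ product law, and the exponential weight, applying the reflection correctly to land on the reciprocal argument, and verifying that the accumulated prefactors collapse to the compact $\frac{e^{-K}}{K\ln 2}(n!)^{-2}(\cdots)^{n+1}$. A secondary point is justifying the interchange of the infinite $I_0$-series with the integrals, which follows from the $1/(n!)^2$ decay by dominated convergence, and stating explicitly that the result is a high-SNR expression inherited from the e-SINR reduction in the first step.
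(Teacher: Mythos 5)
Your proposal is correct and follows essentially the same route as the paper's Appendix B: the same RSI-dominated/high-SNR reduction of $\gamma_{\max}$ to $\mu\gamma_{_{\rm RD}}/(1+\mu^2\mathcal{A}_r|h_0|^2\gamma_{_{\rm RD}})$, the same two-logarithm decomposition, the same product-of-exponentials density $\tfrac{2}{\lambda_1\lambda_2}K_0(2\sqrt{x/\lambda_1\lambda_2})$ with Meijer-$G$ identities yielding the $G_{4,2}^{1,4}$ terms, and the same $I_0$-series expansion of the Rician density (with argument inversion) producing the $G_{1,4}^{4,1}$ series. The only differences are cosmetic — the order of the $w$- and $\gamma_{_{\rm RD}}$-integrations in the second term and the particular Gradshteyn--Ryzhik entries cited — and your explicit flagging of the result as a high-SNR statement is, if anything, more careful than the paper's "RSI dominated scenario" phrasing.
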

\begin{proof}
A proof is provided in Appendix B.
\end{proof}

When the maximum relay is applied, the optimized TS factor can be obtained by solving the following optimization problem:
\begin{eqnarray}
& &{\alpha ^*} = \arg \mathop {\max }\limits_{\alpha } \eta _{_{\rm eff}}(\alpha)  \nonumber \\
& & {\rm{subject~to~}} 0 < \alpha  < 1. \label{eq:opt_prob}
\end{eqnarray}

By verifying the sign of $\frac{\partial \gamma_{\max}}{\partial \alpha }$, we have $\tfrac{{\partial {\gamma _{\max }}}}{{\partial \alpha }} > 0$ when
\[
0<\alpha <\tfrac{{\sqrt {{\gamma _{_{\rm{SR}}}} + 1} }}{{\sqrt {{\gamma _{_{\rm{SR}}}} + 1}  + \eta \sqrt {{\gamma _{_{\rm{SR}}}}{\gamma _{_{\rm{RD}}}}|\mathcal{A}_r h_0|^2} }}
\]
and $\tfrac{{\partial {\gamma _{\max }}}}{{\partial \alpha }} < 0$ when
\[
\tfrac{{\sqrt {{\gamma _{_{\rm{SR}}}} + 1} }}{{\sqrt {{\gamma _{_{\rm{SR}}}} + 1}  + \eta \sqrt {{\gamma _{_{\rm{SR}}}}{\gamma _{_{\rm{RD}}}}\mathcal{A}_r|h_0|^2} }}<\alpha<1.
\]
Thus, $\gamma _{\max}$ is concave for all possible values of $\alpha$. Since $\log_2(1+\gamma_{\max})$ and $1-\Pr (\gamma_{\max} < \gamma _{\rm th})$ in (15) are monotonically increasing with $\gamma _{\max}$, $\log_2(1+\gamma_{\max})$ and $1-\Pr (\gamma_{\max} < \gamma _{\rm th})$ are also concave for all possible values of $\alpha$. Based on the fact that the coefficient $(1-\alpha)$ in (15) does not change the convexity of $R_{_{\rm M}}$ inherited from $\log_2(1+\gamma_{\max})$ and $1-\Pr (\gamma_{\max} < \gamma _{\rm th})$, it can be concluded that $\eta_{_{\rm eff}}(\alpha)$ is concave. Then, the optimized $\alpha ^*$ can be obtained by solving the equation $\frac{\partial \eta_{_{\rm eff}}(\alpha)}{{ \partial \alpha }} = 0$. Given the complicated expression for $\frac{\partial \eta_{_{\rm eff}}(\alpha)}{{ \partial \alpha }} = 0$, the closed-form solution is difficult to obtain. However, since a unique global maximum of $\eta_{_{\rm eff}}(\alpha)$ exists, the optimized TS factor can be obtained by applying the bisection method with the complexity of ${\cal O}(\log \frac{1}{\varepsilon} )$, where $\varepsilon$ is the searching precision \cite{Boyd:Vandenberghe}. Since $P_{\rm out}$ and $C_{_{\rm E}}$ involve only statistics of CSI, the maximum relay takes no advantage of instantaneous CSI in the delay-limited and delay-tolerant transmissions, respectively.

\subsection{SINR Relay}

Since the EE is affected by both the e-SINR and relaying transmission time when the source transmission power is fixed, TS factor can be optimized to maximize the e-SINR and relaying transmission time jointly or separately. In this part, a relay control scheme is proposed to maximize the e-SINR with the aid of instantaneous CSI. Compared with the maximum relay which searches TS factor to optimize the e-SINR and relaying transmission time simultaneously, the proposed scheme only maximizes the e-SINR and we call it the SINR relay throughout the paper.

According to \eqref{eq:y_d_power}, given the received signal power at the destination node as a function of the relay gain, the desired signal power is linear, but the RSI power is nonlinear. Consequently, increasing the relay gain can increase the RSI power faster than a desired signal power and lead to a reduced e-SINR. We can show that \eqref{eq:SINR} has a single maximum point for $\beta \in (0, {1\over \mathcal{A}_r |h_0|^2})$. Thus, by setting the derivative of \eqref{eq:SINR} equals to zero, the relay gain of the SINR relay that maximizes the e-SINR is obtained as
\begin{eqnarray}
\beta _{{\rm{sinr}}} = \tfrac{{{\gamma _{_{\rm{SR}}}}}}{{{\gamma _{_{\rm{SR}}}}\mathcal{A}_r|h_0|^2 + \sqrt {{\gamma _{_{\rm{SR}}}}({\gamma _{_{\rm{SR}}}} + 1){{\gamma }_{_{\rm{RD}}}}\mathcal{A}_r|h_0|^2} }}, \label{eq:beta_opt}
\end{eqnarray}
which also satisfies the non-oscillatory condition in \eqref{eq:beta2<1/f2}. Substituting \eqref{eq:beta_opt} into \eqref{eq:SINR}, we can express the e-SINR as
\begin{eqnarray}
{\gamma _{{\rm{sinr}}}} = \tfrac{{{\gamma _{_{\rm{SR}}}}{{\gamma }_{_{\rm{RD}}}}}}{{{\gamma _{_{\rm{SR}}}}\mathcal{A}_r|h_0|^2 + {{ \gamma }_{_{\rm{RD}}}} + 2\sqrt {{\gamma _{_{\rm{SR}}}}({\gamma _{_{\rm{SR}}}} + 1){{\gamma }_{_{\rm{RD}}}}\mathcal{A}_r|h_0|^2} }}.  \label{eq:SINR_opt}
\end{eqnarray}
Obviously, fully utilizing the relay-harvested energy is not prerequisite in designing the relay gain $\beta _{\rm sinr}$. For example, the redundant energy can be harvested by the relay in addition to the necessary energy harvested to support the relay gain $\beta _{\rm sinr}$. To extend the relaying transmission time as long as possible, the TS factor is designed as small as possible such that the relay-harvested energy is just enough to implement the relay gain $\beta _{\rm sinr}$.
Therefore, by solving $\beta _{\rm sinr} = \beta _{\max}$ for any $\alpha$ ($0<\alpha <1$), we can provide the TS factor for the SINR relay as
\begin{eqnarray}
{\alpha _{{\rm{sinr}}}} = \tfrac{{\sqrt {{\gamma _{_{\rm{SR}}}} + 1} }}{{\sqrt {{\gamma _{_{\rm{SR}}}} + 1}  + \eta \sqrt {{\gamma _{_{\rm{SR}}}}{\gamma _{_{\rm{RD}}}}\mathcal{A}_r|h_0|^2} }}. \label{eq:alpha_opt}
\end{eqnarray}
Since no redundant energy has been harvested by employing $\alpha _{\rm sinr}$, the relaying transmission time $(1-\alpha _{\rm sinr})T$ is longer than those of other $\alpha$s satisfying $\alpha > {\alpha _{{\rm{sinr}}}}$.
The TS factor $\alpha _{\rm sinr}$ can be computed at the destination, or at the relay locally when the relay can access the global CSI.

\begin{ppro}
The outage probability achieved by the SINR relay is given by
\begin{eqnarray}
\!\!\!\!\!\!\!\!\!\!\!\!\!\! {P_{{\rm{out}}}} &\!\!\!\!=\!\!\!\!& 1 \!-\! \frac{1}{{{\lambda _1}}}\int\limits_{\frac{{d_1^m{\gamma _{{\rm{th}}}}\sigma _r^2}}{{\mathcal{L}_1{P_s}}}}^\infty  {\tfrac{{1 + K}}{{1 + K + \mathcal{A}_r \sigma _0^2\rho }}{e^{ - \frac{z}{{{\lambda _1}}} - \frac{{K \mathcal{A}_r\sigma _0^2\rho }}{{1 + K + \mathcal{A}_r\sigma _0^2\rho }}}}} {\rm{d}}z,  \label{eq:P_out_opt}
\end{eqnarray}
where $\rho  \triangleq \frac{{{\gamma _{{\rm{th}}}} + \frac{{{\mathcal{L}_1 }{P_s}z(1 + 2{\gamma _{{\rm{th}}}})}}{{d_1^m\sigma _r^2}} + 2\sqrt {{\gamma _{{\rm{th}}}}(1 + {\gamma _{{\rm{th}}}})\frac{{{\mathcal{L}_1 }{P_s}z}}{{d_1^m\sigma _r^2}}\left( {1 + \frac{{{\mathcal{L}_1 }{P_s}z}}{{d_1^m\sigma _r^2}}} \right)} }}{{\frac{{{\mathcal{L}_2 }{\lambda _2}{{({\mathcal{L}_1 P_s}z - d_1^m\sigma _r^2{\gamma _{{\rm{th}}}})}^2}}}{{d_1^{2m}d_2^m\sigma _r^2\sigma _d^2{\gamma _{{\rm{th}}}}}}}}$.
\end{ppro}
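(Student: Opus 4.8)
The plan is to evaluate the complementary outage probability $1-P_{\rm out}=\Pr(\gamma_{\rm sinr}\ge\gamma_{\rm th})$ by conditioning in sequence on $|h_1|^2$ and $|h_0|^2$, integrating out $|h_2|^2$ first, then $|h_0|^2$, and finally $|h_1|^2$. Starting from the e-SINR (23), I would clear the denominator in $\gamma_{\rm sinr}\ge\gamma_{\rm th}$ and substitute $u=\sqrt{\gamma_{_{\rm RD}}}$, which turns the event into the quadratic inequality $(\gamma_{_{\rm SR}}-\gamma_{\rm th})u^2-2\gamma_{\rm th}\sqrt{\gamma_{_{\rm SR}}(\gamma_{_{\rm SR}}+1)\mathcal{A}_r|h_0|^2}\,u-\gamma_{\rm th}\gamma_{_{\rm SR}}\mathcal{A}_r|h_0|^2\ge0$ in $u\ge0$. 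Inspecting the sign of the leading coefficient shows that a non-negative root exists only when $\gamma_{_{\rm SR}}>\gamma_{\rm th}$ (otherwise both roots are negative and the event is empty); since $\gamma_{_{\rm SR}}=\mathcal{L}_1 P_s|h_1|^2/(d_1^m\sigma_r^2)$, this fixes the lower limit $z=d_1^m\gamma_{\rm th}\sigma_r^2/(\mathcal{L}_1 P_s)$ of the outer integral over $|h_1|^2$.

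Next I would solve the quadratic and keep its positive root $u_+$. The enabling simplification is that the discriminant collapses, because $\gamma_{\rm th}^2\gamma_{_{\rm SR}}(\gamma_{_{\rm SR}}+1)\mathcal{A}_r|h_0|^2+(\gamma_{_{\rm SR}}-\gamma_{\rm th})\gamma_{\rm th}\gamma_{_{\rm SR}}\mathcal{A}_r|h_0|^2=\gamma_{\rm th}\gamma_{_{\rm SR}}^2(\gamma_{\rm th}+1)\mathcal{A}_r|h_0|^2$, leaving a clean closed form for the threshold $\gamma_{_{\rm RD}}\ge u_+^2$. Conditioning on $|h_1|^2=z$ makes $\gamma_{_{\rm SR}}$ deterministic and $\gamma_{_{\rm RD}}=\mathcal{L}_1\mathcal{L}_2 P_s z|h_2|^2/(d_1^md_2^m\sigma_d^2)$ linear in $|h_2|^2$, so the threshold on $\gamma_{_{\rm RD}}$ becomes a threshold on $|h_2|^2$. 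Because $|h_2|^2\sim\mathrm{Exp}(\lambda_2)$, integrating it out contributes $e^{-\mathcal{A}_r|h_0|^2\rho}$; the bookkeeping here---substituting $\gamma_{_{\rm SR}}=\mathcal{L}_1 P_s z/(d_1^m\sigma_r^2)$ back into $u_+^2$ and collecting terms---must reproduce exactly the $z$-dependent factor $\rho$ in the statement, with the crucial feature that the $|h_0|^2$ dependence factors out linearly.

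With the inner probability reduced to $e^{-\mathcal{A}_r|h_0|^2\rho}$, I would average over $|h_0|^2$ against the non-central chi-squared density (18). This is just the Laplace transform of a Rician power variable evaluated at $t=\mathcal{A}_r\rho$, so applying the standard identity $\int_0^\infty e^{-aw}I_0(2\sqrt{bw})\,\mathrm{d}w=a^{-1}e^{b/a}$ \cite{Table_Integrals} with $a=t+(K+1)/\sigma_0^2$ and $b=K(K+1)/\sigma_0^2$ yields precisely $\frac{1+K}{1+K+\mathcal{A}_r\sigma_0^2\rho}\exp\!\big(-\tfrac{K\mathcal{A}_r\sigma_0^2\rho}{1+K+\mathcal{A}_r\sigma_0^2\rho}\big)$. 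Averaging this over $|h_1|^2=z\in(d_1^m\gamma_{\rm th}\sigma_r^2/(\mathcal{L}_1 P_s),\infty)$ with the density $\tfrac1{\lambda_1}e^{-z/\lambda_1}$ and subtracting from unity produces (25). I expect the main obstacle to be the middle step: selecting the correct quadratic root and pushing the threshold through the change of variables so that the $|h_0|^2$ dependence separates cleanly as $\mathcal{A}_r|h_0|^2\rho$, since only then does the closed-form Rician Laplace transform apply; the discriminant collapse is the single algebraic fact that makes this separation possible.
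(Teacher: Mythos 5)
Your proposal is correct and follows essentially the same route as the paper's Appendix C: condition on $|h_1|^2$ and $|h_0|^2$, recast the outage event as a threshold on $|h_2|^2$ (with the certain-outage region $\gamma_{_{\rm SR}}\le\gamma_{\rm th}$ fixing the lower limit of the $z$-integral), integrate out $|h_2|^2$ via its exponential CDF, then integrate out $|h_0|^2$ via the Rician-power Laplace transform, leaving the final integral over $|h_1|^2$. The only difference is presentational: you make explicit the quadratic in $\sqrt{\gamma_{_{\rm RD}}}$ and the discriminant collapse that the paper silently absorbs into its stated $|h_2|^2$-threshold (the $\bar a,\bar b,\bar c$ coefficients, each linear in $\mathcal{A}_r|h_0|^2$), which is a welcome filling-in of omitted algebra rather than a different method.
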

\begin{proof}
A proof is provided in Appendix C.
\end{proof}

\begin{ppro}
The ergodic capacity achieved by the SINR relay is given by\footnote{The numerical computation of this expression can be quickly performed in softwares such as Matlab and Mathematica.}
\begin{eqnarray}
{C_{_{\rm E}}} \!=\! - \int\limits_0^\infty \!\! {\int\limits_0^\infty \!\! {f_{_{|h_0|^2}}}(w)  \rho    {{\log }_2}(1 + \gamma )} {\rm{d}}w{\rm{d}}\gamma,
\label{eq:ce_opt1}
\end{eqnarray}
where $\rho  = v{e^{ - \frac{{2d_1^m\sigma _r^2\gamma }}{{{P_s}{\lambda _1}}}}}\left( {\frac{{2d_1^m\sigma _r^2{K_1}(v)}}{{{P_s}{\lambda _1}}} + \frac{{v{K_0}(v)}}{{2\gamma (\gamma  + 1 - \sqrt {\gamma (\gamma  + 1)} )}}} \right)$ and  $v  \triangleq 2\sqrt {\frac{{\mathcal{A}_r w d_1^md_2^m\sigma _d^2\gamma \left( {2\gamma  + 1 + 2\sqrt {\gamma(\gamma + 1)} } \right)}}{{{\mathcal{L}_1} {\mathcal{L}_2}{P_s}{\lambda _1}{\lambda _2}}}}  $.
\end{ppro}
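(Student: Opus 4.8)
The plan is to evaluate $C_{_{\rm E}}=\mathbb{E}\{\log_2(1+\gamma_{\rm sinr})\}$ by first conditioning on the residual self-interference gain $w=|h_0|^2$ and then averaging against its PDF $f_{_{|h_0|^2}}(w)$ from \eqref{eq:pdf_f}. Writing the inner expectation as $\int_0^\infty \log_2(1+\gamma)\,f_{\gamma_{\rm sinr}\mid w}(\gamma)\,{\rm d}\gamma$ and using $f_{\gamma_{\rm sinr}\mid w}(\gamma)=-\frac{\partial}{\partial\gamma}\overline{F}_{\gamma_{\rm sinr}\mid w}(\gamma)$, where $\overline{F}_{\gamma_{\rm sinr}\mid w}(\gamma)=\Pr(\gamma_{\rm sinr}>\gamma\mid w)$ is the conditional complementary CDF, I would set $\rho\equiv\frac{\partial}{\partial\gamma}\overline{F}_{\gamma_{\rm sinr}\mid w}(\gamma)$. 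This reproduces the displayed double integral together with its leading minus sign, which is exactly the $f=-\,{\rm d}\overline{F}/{\rm d}\gamma$ bookkeeping. The whole task therefore reduces to obtaining $\overline{F}_{\gamma_{\rm sinr}\mid w}$ in closed form and differentiating it.

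To compute $\overline{F}_{\gamma_{\rm sinr}\mid w}$ I would run the same conditional-outage computation as in Proposition~3, but keep $w$ fixed and replace the fixed threshold $\gamma_{\rm th}$ by the running variable $\gamma$. Substituting $\gamma_{_{\rm SR}}=\mathcal{L}_1 P_s|h_1|^2/(d_1^m\sigma_r^2)$ and $\gamma_{_{\rm RD}}=\mathcal{L}_1\mathcal{L}_2 P_s|h_1|^2|h_2|^2/(d_1^m d_2^m\sigma_d^2)$ into \eqref{eq:SINR_opt}, the event $\gamma_{\rm sinr}>\gamma$ becomes, after clearing denominators, a quadratic inequality in $\sqrt{\gamma_{_{\rm RD}}}$. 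The key algebraic step is that its discriminant collapses to a perfect square, so the inequality reduces to a single lower threshold on $\gamma_{_{\rm RD}}$ (equivalently on $|h_2|^2$), valid only on the region $\gamma_{_{\rm SR}}>\gamma$. Integrating out the exponential $|h_2|^2$ then gives a conditional survival probability of the form $\exp(-\,{\rm threshold}/(\cdots|h_1|^2))$, leaving a single integral over the exponential $|h_1|^2$ restricted to $|h_1|^2>d_1^m\sigma_r^2\gamma/(\mathcal{L}_1 P_s)$.

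The first-hop integral is the main obstacle. After shifting the integration variable to absorb the lower limit $\gamma_{_{\rm SR}}>\gamma$, which is what produces the exponential-in-$\gamma$ prefactor, and simplifying the threshold in the practical-SNR regime so that its dependence on $|h_1|^2$ decouples into a pure $1/|h_1|^2$ contribution, the integral takes the canonical form $\int_0^\infty e^{-au-b/u}\,{\rm d}u=2\sqrt{b/a}\,K_1(2\sqrt{ab})$ \cite[Eq.~(3.471.9)]{Table_Integrals}. This yields $\overline{F}_{\gamma_{\rm sinr}\mid w}(\gamma)\propto e^{-\kappa\gamma}\,v K_1(v)$ with $v=2\sqrt{ab}$, which matches the stated $v$ once one recognizes $2\gamma+1+2\sqrt{\gamma(\gamma+1)}=(\sqrt{\gamma}+\sqrt{\gamma+1})^2$; the accompanying decay rate $\kappa$ supplies the exponential factor of $\rho$.

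Finally I would differentiate $\overline{F}_{\gamma_{\rm sinr}\mid w}(\gamma)$ in $\gamma$. Applying the product rule to $e^{-\kappa\gamma}$ and $vK_1(v)$, and invoking the Bessel identity $\frac{\rm d}{{\rm d}v}[vK_1(v)]=-vK_0(v)$ together with the chain-rule derivative $\frac{{\rm d}v}{{\rm d}\gamma}=\frac{v}{2\gamma(\gamma+1-\sqrt{\gamma(\gamma+1)})}$, which follows directly from $v\propto\gamma+\sqrt{\gamma(\gamma+1)}$, reproduces precisely the two-term structure of $\rho$: the $K_1(v)$ term arises from differentiating the exponential prefactor, and the $K_0(v)$ term arises from differentiating $vK_1(v)$ and inserting ${\rm d}v/{\rm d}\gamma$. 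Substituting the resulting $\rho$ back into the conditional-expectation integral completes the proof. I expect the delicate points to be the perfect-square collapse of the discriminant and the practical-SNR reduction that turns the first-hop average into a single $K_1$-Bessel evaluation; given those, the final differentiation is routine modulo the two Bessel identities above.
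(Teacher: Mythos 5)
Your proposal is correct and follows essentially the same route as the paper's Appendix D: obtain the (conditional-on-$w$) CDF by rerunning the Proposition 3 computation with $\gamma_{\rm th}$ replaced by the running variable $\gamma$, invoke the high-SNR simplification so that the first-hop integral collapses to the $e^{-\kappa\gamma}\,vK_1(v)$ form via the $\int_0^\infty e^{-au-b/u}\,{\rm d}u$ Bessel identity, and then differentiate in $\gamma$ under the outer $w$-integral using $\frac{\rm d}{{\rm d}v}[vK_1(v)]=-vK_0(v)$ and the chain rule for $v\propto\gamma+\sqrt{\gamma(\gamma+1)}$. The details you flag as delicate (the perfect-square collapse of the discriminant and the practical-SNR reduction) are exactly the steps the paper relies on through its references to Appendix C and to the procedure of (A.3), so no gap remains.
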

\begin{proof}
A proof is provided in Appendix D.
\end{proof}

Although the TS factor does not appear directly in the expressions in Proposition 3 and Proposition 4,
the instantaneous CSI-based $\alpha_{\rm sinr}$ in \eqref{eq:alpha_opt} has been employed in the derivations of $P_{\rm out}$ and $C_{_{\rm E}}$. Therefore, the average throughputs in \eqref{eq:R_DL} and \eqref{eq:R_DT} for the SINR relay
will be rewritten as
\begin{subequations}
\begin{eqnarray}
{R_{_{{\rm{DL}}}}}  &\!\!=\!\!& \left\{ {\begin{array}{*{20}{c}}
  {\mathbb{E}\{(1-\alpha_{\rm sinr})R\},}& \gamma_{\rm sinr} > \gamma_{\rm th} \\
  {0,}& \gamma_{\rm sinr} < \gamma_{\rm th}
\end{array}} \right.\!\!, \label{eq:R_DL_new}\\
{R_{_{\rm{DT}}}}  &\!\!=\!\!& \mathbb{E}\{(1 - \alpha_{\rm sinr} )\log _2(1+\gamma_{\rm sinr})\}. \label{eq:R_DT_new}
\end{eqnarray}
\end{subequations}

\subsection{Target Relay}

When the SINR relay  is employed, exact knowledge of the channel SNR $\gamma _{_{\rm RD}}$ has to be exploited, which can be estimated only at the destination. A feedback channel is then required if $\alpha _{\rm sinr}$ is computed locally at the relay. Therefore, this subsection proposes a simplified relay control scheme, i.e., the target relay, that aims to achieve a target e-SINR $\hat \gamma $. To avoid using knowledge of $\gamma _{_{\rm RD}}$, the target e-SINR should satisfy $\hat \gamma < \gamma _{_{\rm SR}}$. Denoting the e-SINR achieved by the target relay as $\gamma _{\rm tar}$ ($\gamma _{\rm tar} = \hat \gamma$), the target relay is designed to maximize its e-SINR, i.e., to design the relay gain $\beta _{\rm tar}$ such that $\gamma _{\rm tar} = \gamma _{\rm sinr}$.
Denoting the TS factor for the target relay by $ \alpha _{\rm tar}$ and substituting $\mu  _{\rm tar} \triangleq \frac{{ \alpha _{\rm tar}  \eta }}{{1 -   \alpha _{\rm tar}}}$ into  \eqref{eq:beta_max} and \eqref{eq:SINR_max}, we can write the relay gain and  e-SINR as
\begin{eqnarray}
\beta _{\rm tar } = \tfrac{{ \mu  _{\rm tar}{\gamma _{_{\rm{SR}}}}}}{{1+{\gamma _{_{\rm{SR}}}} +  { \mu _{\rm tar} {\gamma _{_{\rm{SR}}}} \mathcal{A}_r |h_0|^2 }}}  \label{eq:beta_tar}
\end{eqnarray}
and
\begin{eqnarray}
{\gamma _{\rm tar}} = \tfrac{{\mu _{\rm tar} {\gamma _{_{\rm{SR}}}}{{\gamma }_{_{\rm{RD}}}}}}{{{\gamma _{_{\rm{SR}}}} + \left( {\mu _{\rm tar} {\gamma _{_{\rm{SR}}}}\mathcal{A}_r|h_0|^2 + 1} \right)\left( { \mu _{\rm tar} {{ \gamma }_{_{\rm{RD}}}} + 1} \right)}}, \label{eq:SINR_tar}
\end{eqnarray}
respectively. For a given $\hat \gamma$, by eliminating $\gamma _{_{\rm RD}}$ from the equation pair $\{ \gamma _{\rm tar} = \hat \gamma, \gamma _{\rm sinr} = \hat \gamma \}$, the TS factor is given by ${\alpha _{{\rm{tar}}}} =$
\begin{eqnarray}
\tfrac{{({\gamma _{_{\rm{SR}}}} + 1)({\gamma _{_{\rm{SR}}}} - \hat \gamma )}}{{({\gamma _{_{\rm{SR}}}} + 1)({\gamma _{_{\rm{SR}}}} - \hat \gamma  + \eta \hat \gamma {\gamma _{_{\rm{SR}}}}\mathcal{A}_r |h_0|^2) + \eta {\gamma _{_{\rm{SR}}}}\mathcal{A}_r|h_0|^2\sqrt {\hat \gamma (\hat \gamma  + 1){\gamma _{_{\rm{SR}}}}({\gamma _{_{\rm{SR}}}} + 1)} }}. \nonumber
\end{eqnarray}
Since $\alpha _{\rm tar}$ harvests only a necessary energy to support the relay gain $\beta _{\rm tar}$, the relaying transmission time $(1-\alpha _{\rm tar})T$ is longer than those of other $\alpha$s satisfying $\alpha > \alpha _{\rm tar}$.  Also, since $\hat \gamma < \gamma _{_{\rm SR}}$, $\alpha _{\rm tar} $ satisfies $0< \alpha _{\rm tar} <1$.
When $\hat \gamma \ge \gamma _{_{\rm SR}}$, we have $\alpha _{{\rm{tar}}} \le 0$. In this case, no time is assigned for EH such that information relaying fails due to a lack of power.
Alternatively, we can reset TS factor by $\alpha _{\rm tar} =1$ such that only EH is implemented for the entire time block.

\begin{ppro}
The outage probability achieved by the target relay is given by
\begin{eqnarray}
\!\!\!\!\!\!\!\!\!\!\!\!{P_{{\rm{out}}}} &\!\!\!\!= \!\!\!\!& 1 - \frac{1}{{{\lambda _1}}} \!\!\!\!\!\!\!\!\! \int\limits_{\frac{{d_1^m\gamma _{{\rm{th}}}^2\sigma _r^2(\hat \gamma  + 1)}}{{\mathcal{L} {P_s}(\hat \gamma  + 2\hat \gamma {\gamma _{{\rm{th}}}} - \gamma _{{\rm{th}}}^2)}}}^\infty  \!\!\!\!\!\!\!\!\! {\tfrac{{1 + K}}{{1 + K - \mathcal{A}_r \sigma _0^2\rho }}{e^{ - \frac{z}{{{\lambda _1}}} + \frac{{K \mathcal{A}_r  \sigma _0^2\rho }}{{1 + K - \mathcal{A}_r  \sigma _0^2\rho }}}}} {\rm{d}}z, \label{eq:P_out_tar}
\end{eqnarray}
where $\rho  \triangleq \frac{{d_1^md_2^m{\gamma _{{\text{th}}}}\sigma _d^2\omega }}{{\mathcal{L}_2 {\lambda _2}((\omega  - 1){\gamma _{{\rm{th}}}} - 1)((\omega  - 1)\mathcal{L}_1 {P_s}z - d_1^m\sigma _r^2)}}$ and $\omega  \triangleq \sqrt {\frac{{(\hat \gamma  + 1)(d_1^m\sigma _r^2 + \mathcal{L}_1 {P_s}z)}}{{\mathcal{L}_1 {P_s}\hat \gamma z}}} $.
\end{ppro}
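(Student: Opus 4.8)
The plan is to follow the same conditioning strategy used for the SINR relay in Proposition~3: write $P_{\rm out}=1-\Pr(\gamma_{\rm tar}\ge\gamma_{\rm th})$ and then integrate out the three independent variables $|h_1|^2$, $|h_2|^2$, and $|h_0|^2$ in turn. Writing $z\triangleq|h_1|^2$ and $w\triangleq|h_0|^2$, and using $\gamma_{_{\rm SR}}=\mathcal{L}_1 P_s z/(d_1^m\sigma_r^2)$ together with $\gamma_{_{\rm RD}}=\mathcal{L}_1\mathcal{L}_2 P_s |h_1|^2|h_2|^2/(d_1^md_2^m\sigma_d^2)$, I would treat $z$ and $w$ as the conditioning variables and reduce the outage event to a threshold condition on $|h_2|^2$. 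The distinctive feature here is that the TS factor $\alpha_{\rm tar}$, and hence $\mu_{\rm tar}$, is itself a function of $\gamma_{_{\rm SR}}$ and $w$ through the closed form derived above; this channel dependence of $\mu_{\rm tar}$ is what ultimately produces the sign change relative to Proposition~3.

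First I would substitute $\mu_{\rm tar}$ into \eqref{eq:SINR_tar}. Once $z$ and $w$ are fixed, $\gamma_{\rm tar}$ has the form $a\gamma_{_{\rm RD}}/(b+c\gamma_{_{\rm RD}})$ with positive $a,b,c$, hence is monotonically increasing in $\gamma_{_{\rm RD}}$, so $\gamma_{\rm tar}\ge\gamma_{\rm th}$ is equivalent to $\gamma_{_{\rm RD}}\ge B_{\rm th}(z,w)$, i.e. $|h_2|^2\ge\theta(z,w)$ for an explicit threshold $\theta$. Solving $\gamma_{\rm tar}=\gamma_{\rm th}$ gives $B_{\rm th}=\gamma_{\rm th}(\gamma_{_{\rm SR}}+\mu_{\rm tar}\gamma_{_{\rm SR}}\mathcal{A}_r w+1)/\{\mu_{\rm tar}[\gamma_{_{\rm SR}}(1-\gamma_{\rm th}\mu_{\rm tar}\mathcal{A}_r w)-\gamma_{\rm th}]\}$. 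The feasibility boundary, below which even $\gamma_{_{\rm RD}}\to\infty$ cannot meet the threshold and the relay is always in outage, is where this denominator vanishes; after inserting the closed form of $\mu_{\rm tar}$ the $w$-dependence is expected to cancel and collapse the condition to $(\omega-1)\gamma_{\rm th}=1$, with $\omega\triangleq\sqrt{(\hat\gamma+1)(d_1^m\sigma_r^2+\mathcal{L}_1 P_s z)/(\mathcal{L}_1 P_s\hat\gamma z)}$, which translates into the stated lower limit $z\ge d_1^m\gamma_{\rm th}^2\sigma_r^2(\hat\gamma+1)/[\mathcal{L}P_s(\hat\gamma+2\hat\gamma\gamma_{\rm th}-\gamma_{\rm th}^2)]$.

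Next, since $|h_2|^2$ is exponential with mean $\lambda_2$, the inner probability is $\Pr(|h_2|^2\ge\theta)=e^{-\theta(z,w)/\lambda_2}$. The crux of the argument is to show that, after the $\mu_{\rm tar}$ substitution, the exponent $-\theta(z,w)/\lambda_2$ is \emph{affine} in $w$ and in fact proportional to $w$, of the form $\mathcal{A}_r\rho\,w$ with $\rho$ the $w$-free quantity in the statement. Averaging over the Rician power with the PDF \eqref{eq:pdf_f} then amounts to the moment generating function $\int_0^\infty f_{_{|h_0|^2}}(w)e^{\mathcal{A}_r\rho w}\,{\rm d}w=\frac{1+K}{1+K-\mathcal{A}_r\sigma_0^2\rho}\exp\!\big(\frac{K\mathcal{A}_r\sigma_0^2\rho}{1+K-\mathcal{A}_r\sigma_0^2\rho}\big)$; the particular sign of $\rho$ for the target relay (opposite to the SINR relay, owing to the $w$-dependence of $\alpha_{\rm tar}$) is precisely what turns the $1+K+\mathcal{A}_r\sigma_0^2\rho$ of Proposition~3 into the $1+K-\mathcal{A}_r\sigma_0^2\rho$ of \eqref{eq:P_out_tar}. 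A final averaging over $z$ with density $\lambda_1^{-1}e^{-z/\lambda_1}$ from the lower limit to infinity yields the claimed expression.

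The main obstacle I anticipate is purely algebraic: carrying the channel-dependent $\mu_{\rm tar}$ through \eqref{eq:SINR_tar}, solving $\gamma_{\rm tar}=\gamma_{\rm th}$ for $\gamma_{_{\rm RD}}$, and then verifying the two nontrivial cancellations, namely that the feasibility boundary loses its $w$-dependence and collapses to $(\omega-1)\gamma_{\rm th}=1$, and that $\theta/\lambda_2$ collapses to a quantity exactly linear in $w$ so that the Rician moment generating function applies in closed form. Keeping track of the sign of the denominator $\gamma_{_{\rm SR}}(1-\gamma_{\rm th}\mu_{\rm tar}\mathcal{A}_r w)-\gamma_{\rm th}$, and hence of $\rho$ and the admissible range of $z$, throughout these manipulations is the step most likely to require care.
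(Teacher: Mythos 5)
Your proposal is correct and takes essentially the same approach as the paper: the paper's ``proof'' of Proposition 5 is just the statement that it is similar to the proof of Proposition 3 (Appendix C), and your plan---condition on $|h_1|^2$ and $|h_0|^2$, reduce the outage event to a threshold on $|h_2|^2$ that becomes exactly linear in $|h_0|^2$ once the channel-dependent $\mu_{\rm tar}$ is substituted (since $\mu_{\rm tar}\mathcal{A}_r|h_0|^2$ is free of $|h_0|^2$), average over the Rician RSI power via its moment generating function, and integrate over the feasible range of $|h_1|^2$ fixed by the vanishing denominator---is precisely that adaptation. The sign flip you identify, from $1+K+\mathcal{A}_r\sigma_0^2\rho$ in Proposition 3 to $1+K-\mathcal{A}_r\sigma_0^2\rho$ here, is indeed the same MGF identity evaluated with the (negative) $\rho$ as defined in Proposition 5, so your outline matches the paper's intended argument step for step.
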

\begin{proof}
A proof of this proposition is similar to the proof of Proposition 3.
\end{proof}


For the target relay, the effective EH and relaying transmission fail when $\gamma _{_{\rm SR}} \le \hat \gamma $. In this case, the e-SINR $\gamma _{\rm tar}$ does not exist because of an impractical $\mu _{\rm tar}$. Although the derivative of $P_{\rm out}$ in Proposition 5 can still be obtained by the mathematical manipulation, it cannot be used to represent the PDF of $\gamma _{\rm tar}$ because of the discontinuity of $\gamma _{\rm tar}$. Therefore, finding the PDF of $\gamma _{\rm tar}$ to evaluate $C_{_{\rm E}} = \mathbb E \{ \log _2(1+\gamma _{\rm tar}) \}$ is difficult. As an alternative, the ergodic capacity achieved by the target relay will be investigated by simulations. Similar to the SINR relay, the average throughputs for the delay-limited and delay-tolerant transmissions are given by
\begin{subequations}
\begin{eqnarray}
{R_{_{{\rm{DL}}}}}  &\!\!=\!\!& \left\{ {\begin{array}{*{20}{c}}
  {\mathbb{E}\{(1-\alpha_{\rm tar})R\},}& \gamma_{\rm tar} > \gamma_{\rm th} \\
  {0,}& \gamma_{\rm tar} < \gamma_{\rm th}
\end{array}} \right.\!\!, \label{eq:R_DL_new2}\\
{R_{_{\rm{DT}}}}  &\!\!=\!\!& \mathbb{E}\{(1 - \alpha_{\rm tar} )\log _2(1+\gamma_{\rm tar})\}. \label{eq:R_DT_new2}
\end{eqnarray}
\end{subequations}

\subsection{Direct Transmission}

Due to an energy loss resulted from the EH circuit and the two-hop radio propagation, SWIET FDR suffers from a degradation in the EE. However, since SWIET FDR provides a relative freedom besides the source-destination direct transmission, it results in an additional performance advantage. In this subsection, we consider the source-destination transmission scenario with a direct link, where an SWIET FDR node can be employed to assist the information transfer.

For the transmission scheme exploring both the direct link and SWIET FDR, we assume that the signal transmitted from the source directly and the delayed replica transmitted from the relay are fully resolvable by the destination, so that they can be appropriately co-phased and merged via maximum ratio combing (MRC).
The received e-SINR at the destination can be expressed as
\begin{eqnarray}
\gamma _{_{\rm MRC}} \triangleq \gamma _{_{\rm SD}} +  \gamma _{_{{\rm SRD}}},
\end{eqnarray}
where $\gamma _{_{\rm SD}} \triangleq \tfrac{\mathcal{L}_1 P_s |h_3|^2} {d_3^m \sigma _d^2}$ is the SNR achieved by the direct transmission, $h_3$ is the channel coefficient of the direct link, and $d_3$ is the distance between the source and destination.
Then, the instantaneous throughput of the direct transmissions with SWIET FDR can be expressed as
\begin{eqnarray}
R_{_{\rm I}} = \alpha  \log _2(1+ \gamma_{_{\rm SD}}) + (1-\alpha  ) \log _2 (1+  \gamma _{_{\rm MRC}} ).
\end{eqnarray}
It is easy to show that the outage probability of the direct transmission without SWIET FDR is given by
\begin{eqnarray}
P_{\rm out}^{_{(\rm SD)}} &=&  1- e^{-\tfrac{d_3^m \sigma_d^2 \gamma _{\rm th}}{\mathcal{L}_1  P_s \lambda _3}},
\end{eqnarray}
where $\lambda_3$ is the mean of the exponential random variable $|h_3|^2$.
\begin{cor}
The outage probability of the direct transmission with SWIET FDR is upper-bounded by
\begin{eqnarray}
P_{\rm out, ub}^{_{(\rm MRC)}} = P_{\rm out}^{_{(\rm SD)}}   P_{\rm out}^{_{_{({\rm SRD})}}},
\end{eqnarray}
where the outage probabilities $P_{\rm out}^{_{(\rm SRD)}}$s achieved by the maximum relay, SINR relay, and target relay are given by  \eqref{eq:proposition1}, \eqref{eq:P_out_opt}, and \eqref{eq:P_out_tar}, respectively.
\end{cor}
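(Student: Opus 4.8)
The plan is to bound the true MRC outage event by a simpler product event and then exploit the statistical independence between the direct link and the relay path. First I would write down the exact outage probability of the combined scheme, namely
\begin{eqnarray}
P_{\rm out}^{_{(\rm MRC)}} = \Pr(\gamma _{_{\rm MRC}} < \gamma _{\rm th}) = \Pr(\gamma _{_{\rm SD}} + \gamma _{_{\rm SRD}} < \gamma _{\rm th}), \nonumber
\end{eqnarray}
which follows directly from the definition $\gamma _{_{\rm MRC}} \triangleq \gamma _{_{\rm SD}} + \gamma _{_{\rm SRD}}$.

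The key observation is that both $\gamma _{_{\rm SD}}$ and $\gamma _{_{\rm SRD}}$ are non-negative, so that whenever their sum falls below $\gamma _{\rm th}$ each of them must individually fall below $\gamma _{\rm th}$. Hence the outage event is contained in the intersection $\{\gamma _{_{\rm SD}} < \gamma _{\rm th}\} \cap \{\gamma _{_{\rm SRD}} < \gamma _{\rm th}\}$, and monotonicity of probability yields
\begin{eqnarray}
P_{\rm out}^{_{(\rm MRC)}} \le \Pr(\gamma _{_{\rm SD}} < \gamma _{\rm th},\, \gamma _{_{\rm SRD}} < \gamma _{\rm th}). \nonumber
\end{eqnarray}
Since $\gamma _{_{\rm SD}}$ is a function of $|h_3|^2$ alone, whereas $\gamma _{_{\rm SRD}}$ depends only on $|h_1|^2$, $|h_2|^2$, and $|h_0|^2$, and the direct-link fading $h_3$ is independent of $h_0$, $h_1$, and $h_2$, the joint probability factorizes, giving
\begin{eqnarray}
\Pr(\gamma _{_{\rm SD}} < \gamma _{\rm th},\, \gamma _{_{\rm SRD}} < \gamma _{\rm th}) = \Pr(\gamma _{_{\rm SD}} < \gamma _{\rm th}) \Pr(\gamma _{_{\rm SRD}} < \gamma _{\rm th}) = P_{\rm out}^{_{(\rm SD)}} P_{\rm out}^{_{(\rm SRD)}}. \nonumber
\end{eqnarray}

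Finally, I would substitute the relay-specific expressions for $P_{\rm out}^{_{(\rm SRD)}}$, i.e., \eqref{eq:proposition1}, \eqref{eq:P_out_opt}, and \eqref{eq:P_out_tar} for the maximum relay, SINR relay, and target relay, respectively, to obtain the three claimed upper bounds. The argument is identical for all three schemes because the event-inclusion step makes no use of the particular form of $\gamma _{_{\rm SRD}}$. There is no serious analytical obstacle here; the only point requiring care is the independence justification, namely verifying that $h_3$ is statistically independent of $\{h_0, h_1, h_2\}$ so that the joint CDF separates, and observing that this independence—and hence the factorization—holds uniformly across the three relay control schemes.
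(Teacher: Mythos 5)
Your proof is correct, and it takes a genuinely different---and more elementary---route than the paper's Appendix E. The paper conditions on the relay-path e-SINR: it writes the combined outage probability as $\int_0^{\gamma_{\rm th}} F_{_{|h_3|^2}}(\gamma_{\rm th}-\gamma)\, f_{\gamma_{_{\rm SRD}}}(\gamma)\,{\rm d}\gamma$, splits off $P_{\rm out}^{_{(\rm SRD)}}$, invokes the weighted mean value theorem to collapse the remaining integral to a value of $\bar F_{_{|h_3|^2}}$ at an intermediate point $\gamma_c \in [0,\gamma_{\rm th}]$, and then uses monotonicity of the CDF to reach $P_{\rm out}^{_{(\rm SD)}} P_{\rm out}^{_{(\rm SRD)}}$. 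You replace all of that analysis with the set inclusion $\{\gamma_{_{\rm SD}}+\gamma_{_{\rm SRD}}<\gamma_{\rm th}\}\subseteq\{\gamma_{_{\rm SD}}<\gamma_{\rm th}\}\cap\{\gamma_{_{\rm SRD}}<\gamma_{\rm th}\}$ (valid since both SINRs are non-negative) followed by factorization under independence of $h_3$ from $\{h_0,h_1,h_2\}$. What your route buys: it needs no regularity of $\gamma_{_{\rm SRD}}$ whatsoever---no density, no continuity---which is a real advantage here, because the paper itself concedes in Section III-C that $\gamma_{\rm tar}$ is discontinuous and that the derivative of its outage expression cannot serve as a valid PDF, so the paper's appeal to $f_{\gamma_{_{\rm SRD}}}$ and the mean value theorem is on shaky ground precisely for the target relay, one of the three schemes the corollary covers; your argument applies uniformly and rigorously to all three. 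What the paper's route buys: the intermediate identity $P_{\rm out}^{_{(\rm SD+SRD)}} = P_{\rm out}^{_{(\rm SRD)}}\bigl(1-\bar F_{_{|h_3|^2}}(\gamma_{\rm th}-\gamma_c)\bigr)$ quantifies the slack in the bound and yields a strict inequality, information your inclusion argument discards. Both proofs ultimately rest on the same independence assumption---implicit in the paper's conditioning step, and stated explicitly (correctly) by you.
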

\begin{proof}
A proof is provided in Appendix E.
\end{proof}

Based on Corollary 1, the average throughput of the direct transmissions with the maximum relay in the delay-limited transmission mode is lower bounded by
\begin{eqnarray}
R_{_{\rm DL, lb}}^{_{(\rm SD+SRD)}}  =  \alpha   (1 - P_{\rm out}^{_{(\rm SD)}})R  +   (1 - \alpha )(1 - P_{\rm out, ub}^{_{(\rm MRC)}}) R.
\end{eqnarray}
For the direct transmission with the SINR relay and target relay, where $\alpha$ depends on the instantaneous CSI, the average throughput can be evaluated by
\begin{eqnarray}
R_{_{\rm DL}}^{_{\rm SD+SRD}} = \mathbb{E} \{c_1\alpha   R + c_2(1-\alpha )R \}, \label{eq:R_DL_new3}
\end{eqnarray}
where $c_1 \!=\! \left\{ {\begin{array}{*{20}{c}}
  {1,}& \gamma_{_{\rm SD}} > \gamma _{\rm th} \\
  {0,}& \gamma_{_{\rm SD}} < \gamma _{\rm th}
\end{array}} \right.$
and $c_2 \!=\! \left\{ {\begin{array}{*{20}{c}}
  {1,}& \gamma_{_{\rm MRC}} > \gamma _{\rm th} \\
  {0,}& \gamma_{_{\rm MRC}} < \gamma _{\rm th}
\end{array}} \right.$\!.

In the delay tolerant transmission mode, the ergodic capacity of the direct transmission without SWIET FDR can be expressed as $C_{_{\rm E}}^{_{(\rm SD)}} = \frac{1}{\ln 2} e^{\tfrac{d_3^m \sigma_d^2}{\mathcal{L}_1 P_s \lambda _3}} \Gamma \left(0, \tfrac{d_3^m \sigma_d^2}{\mathcal{L}_1 P_s \lambda _3}\right)$, where $\Gamma(\cdot, \cdot)$ in the upper incomplete gamma function \cite[Eq. (8.350.2)]{Table_Integrals}.
The throughput of the direct transmission with SWIET FDR can be expressed as
\begin{eqnarray}
R_{_{\rm DT}}^{_{(\rm SD+SRD)}}    = \mathbb E \{\alpha \} C_{_{\rm E}}^{_{(\rm SD)}}  + \mathbb E \{(1-\alpha ) \log_2(1+ \gamma_{_{\rm MRC}}) \}. \label{eq:R_DT_new3}
\end{eqnarray}
When only the statistical CSI is available, \eqref{eq:R_DT_new3} can be rewritten as $R_{_{\rm DT}}^{_{(\rm SD+SRD)}}    = \alpha C_{_{\rm E}}^{_{(\rm SD)}}  + (1-\alpha ) C_{_{\rm E}}^{_{(\rm MRC)}}$, where $C_{_{\rm E}}^{_{(\rm MRC)}}  \triangleq \log_2(1+ \gamma_{_{\rm MRC}})$ is the ergodic capacity achieved by the MRC at the destination. Due to mathematical intractability, the throughputs expressed in \eqref{eq:R_DL_new3} and \eqref{eq:R_DT_new3} and the related EE will be numerically investigated in the next section.

\subsection{CSI-related Issue}

Based on the pilot symbols sent from the source over dual-hop channels, CSI can be estimated to facilitate the wireless information and energy transfer \cite{SWIPT_architecture, MIMO_B_SWIPT, SWIPT_FD_selfenergy, SWIPT_protocol_AF, SWIPT_protocol_DF}. Similarly to the works of \cite{path_selection} and \cite{Energy-Efficient}, a request-to-send (RTS)/clear-to-send (CTS) based channel estimation scheme can be employed at the beginning of the entire transmission and instantaneous CSI can be obtained. Surely, RTS/CTS based channel estimation incurs extra overhead and energy consumption.

Moreover, the CSI estimation error may degrade the performance of SWIET FDR system. For the maximum relay, the bisection method needs full CSI, i.e., $|h_i|^2$ for $i=0, 1$, and 2 to evaluate the throughput. To compute $\alpha$, the SINR relay also needs full CSI, while the target relay only needs $|h_0|^2$ and $|h_1|^2$. The CSI estimation error can be modeled as ${\tilde h}_i  = h_i + h_{i,e}$ for $i=0, 1, 2$, and 3, respectively, where the estimation error $h_{i,e}$ is a complex Gaussian random variable independent of $h_i$. In addition, we assume that $h_{i,e}$ has the zero mean and the variance $\kappa |h_i|^2$, where the scaling factor $\kappa$ is a positive real number representing the relative ratio between the estimation error and true CSI. In simulations, we apply the above procedure to investigate the effect of CSI estimation error on the performance metrics.

\section{Numerical Results}

\begin{table}[tb] 
\caption{Simulation Parameters}
\renewcommand\arraystretch{1.4}
\begin{center}
\renewcommand{\arraystretch}{1.1}
\setlength\tabcolsep{4pt}
\begin{tabular}{|l|l|l|}
\hline
No. & Parameter & Value  \\
\hline
1 & Carrier frequency                        & 915 MHz   \\
2 & Bandwidth                                & 200 kHz \\
3 & Fixed transmission rate $R$              & 2 bps/Hz \\
4 & Path loss at the reference distance 1 m: $\mathcal{L}$  & -30 dB \\
5 & Distance between the nodes: $d_1$ and $d_2$ & 10 m \\
6 & Path loss exponent $m$                  & 3 ~~~  \\
7 & Means of dual-hop channel gains: $\lambda_1$ and $\lambda_2$ & 1~~~~~ \\
8 & Noise power: $\sigma _r^2$ and $\sigma _d^2$ & $-$95 dBm \\
9 & Source transmit antenna gain            & 18 dBi  \\
10 & Relay transmit antenna gain            & 8 dBi \\
11 & Rician $K$-factor of the RSI channel    & 6 dB \\
12 & Energy harvesting efficiency $\eta$     & 0.8~~~~  \\
\hline
\end{tabular}
\end{center}
\end{table}

This section presents some numerical results to validate the analytical expressions developed in the previous section and discuss the EE performances for the considered relay control schemes. Unless otherwise stated, the parameters used in simulations are given in Table 1.


\begin{figure}[htbp]
\begin{center}
\subfigure[e-SINR versus $P_s$.]{\includegraphics[width=2.95in]{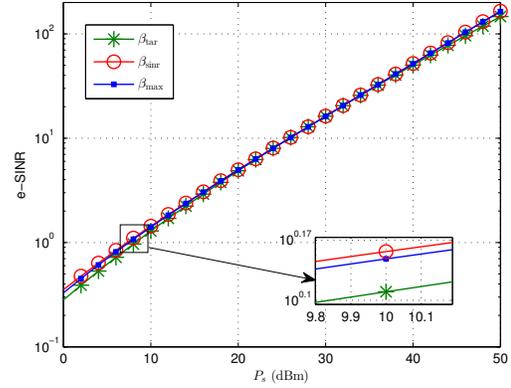}  \label{fig:2a} }
\vspace{-0.2in}
\subfigure[$\alpha$ versus $P_s$]{\includegraphics[width=2.95in]{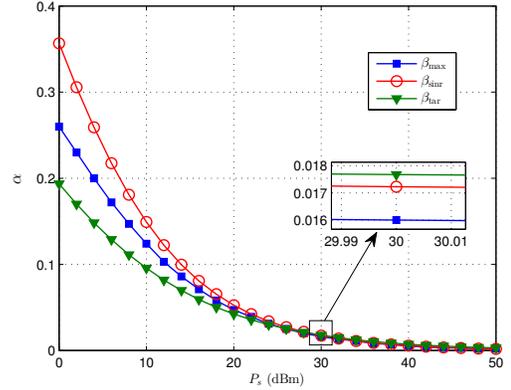} \label{fig:2b} }
\vspace{0.15in}
\caption{Performance metrics in instantaneous transmission.}
\end{center}
\vspace{-0.22in} \label{fig:11}
\end{figure}

Fig. 2 investigates the e-SINR and corresponding $\alpha$ versus the source transmission power in the instantaneous transmission. In Fig. 2, we focus on a single frame with the following channel setting: $|h_0|^2=0.342$, $|h_1|^2=1.898$, and $|h_2|^2=0.986$. When the source transmission power $P_s$ increases from $0$ dBm to $50$ dBm, the target e-SINRs for the target relay are set to increase linearly from $0$ dB to $20$ dB with a step size of $0.8$ dB.
As observed, the three relay control schemes achieve almost the same e-SINR and the highest e-SINR is achieved by the SINR relay. In the region of the low transmission power, the e-SINR achieved by the target relay is slightly lower than that of the maximum relay. When $P_s$ increases, the e-SINR gap between the maximum relay and target relay becomes negligible. Fig. 2(b) shows that TS factor decreases for the three relay control schemes when $P_s$ increases. This result shows that the relay node can harvest an enough energy in a short time when $P_s$ is large enough ($P_s > 20$ dBm). In addition, since the SINR relay applies the largest $\alpha$ to harvest energy, this leads to the shortest relaying transmission time. From the point of view of the EE, Fig. 2 shows that the SINR relay achieves its EE depending on the e-SINR more than the effective relaying transmission time, whereas the target relay achieves its EE with its target e-SINR and corresponding $(1-\alpha_{\rm tar})T$. Note that the e-SINR and TS factor of~ the maximum relay are~ obtained by the bisection searching, which will result in the highest EE in instantaneous transmission, as depicted in Fig. 4.

\begin{figure}[htbp]
\begin{center}
\includegraphics[width=2.95in]{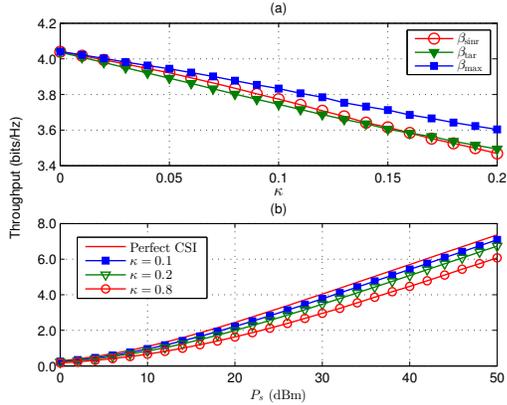}
\vspace{-0.15in}
\caption{The instantaneous throughput under CSI estimation error.}
\end{center}
\vspace{-0.05in}
\label{fig:5}
\end{figure}


The impacts of CSI estimation error on the instantaneous throughput are illustrated in Fig. 3, where the channel realization is as same as that of Fig. 2. In Fig. 3(a), we set $P_s = 30$ dBm, $\hat \gamma = 12$ dB, and assume that $h_0$, $h_1$, and $h_2$ suffer the same level of CSI estimation error. The curves in Fig. 3 are obtained by averaging over $20,000$ random CSI estimation errors. As observed in Fig. 3(a), the throughputs of all the relay control schemes decrease when $\kappa$ increases. When $\kappa = 0.1$, the throughputs of the maximum relay and target relay become about $0.20$ bps/Hz lower than that of perfect CSI ($\kappa = 0$), whereas the throughput of the SINR relay decreases about $0.27$ bps/Hz from that of the perfect CSI. This result indicates that the SINR relay is more likely to be affected by the CSI estimation error than the maximum relay. For the SINR relay, Fig. 3(b) plots the throughput versus $P_s$ under different $\kappa$s. As observed in Fig. 3(b), the throughput of the SINR relay decreases slightly for $\kappa =0.1$ and $\kappa=0.2$. When $P_s$ increases, the throughput degradation also increases. Moreover, Fig. 3(b) also depicts the throughput degrading for $\kappa = 0.8$, which is corresponding to the scenario with a serious CSI estimation error. When $P_s = 30$ dBm, the throughput degradation is about $1$ bps/Hz for $\kappa =0.8$. This result shows that the SINR relay can not work well in the case of a serious CSI estimation error.


\begin{figure}[htbp]
\begin{center}
\subfigure[EE versus $|h_0|^2$.]{\includegraphics[width=2.95in]{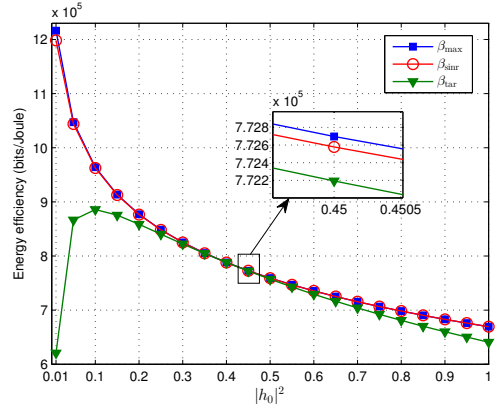}  \label{fig:4a} }
\vspace{-0.2in}
\subfigure[EE versus $P_s$]{\includegraphics[width=2.95in]{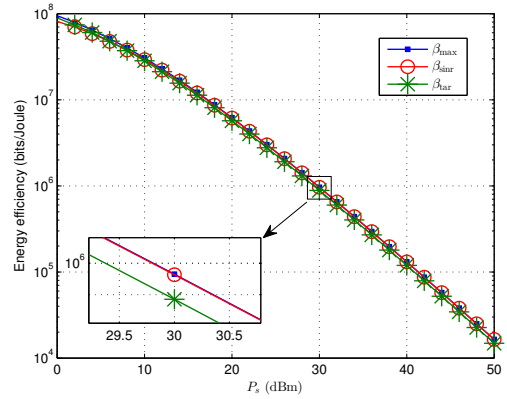} \label{fig:4b} }
\vspace{0.15in}
\caption{EE in instantaneous transmission.}
\end{center}
\label{fig:11}
\vspace{-0.22in}
\end{figure}

Fig. 4 illustrates the impacts of $|h_0|^2$ and $P_s$ on the EE of instantaneous transmission. In Fig. 4(a), we set $P_s=30$ dBm, $\hat \gamma =12$ dB, $h_1$ and $h_2$ are the same as those of Fig. 2. As a result, the EE achieved by the SINR relay is very close to that of the maximum relay. When $|h_0|^2$ increases, the EEs of the maximum relay and SINR relay decreases simultaneously. For the target relay with $\hat \gamma =12$ dB, its EE approaches to that of the maximum relay in the region of $|h_0|^2$ between $0.2$ and $0.8$. Beyond this region, the EE of the target relay degrades slightly compared to that of the maximum relay.
The curves of EE versus $P_s$ are plotted in Fig. 4(b), where we fixed $|h_0|^2=0.1$. For the target relay, $\hat \gamma$ increase from $0$ dB to $20$ dB with a step size of $0.8$ dB. In the considered whole region of $P_s$, the EEs of all the three relay control schemes are at the same level. In addition, the EEs of all the three relay control schemes decrease when $P_s$ increases. Actually, the maximum EEs are achieved by the three relay control schemes when $P_s$ is less than $-8$ dBm for this channel realization. Since the EE maximization is a concave fractional program of $P_s$, the EE becomes a monotonically decreasing function of the considered practical $P_s$ ($P_s>0$ dBm). The results in Fig. 4 also show that the target relay is competitive compared to the maximum relay and SINR relay since the target relay has the closed-form TS factor and does not need the second-hop CSI.

Fig. 5 shows the impact of $\hat \gamma $ on the EE for the target relay under the same channel realization of Fig. 4(b). As observed, the maximum throughput can always be achieved by the target relay. For example, when $P_s=30$ dBm, the channel SNR of the first-hop link is $\gamma _{\rm SR} =85.78$ dB, whereas the target e-SINR that achieves the maximum EE is about $\hat \gamma =15$ dB, which satisfies $\hat \gamma < \gamma _{\rm SR}$ (Note that we set $\hat \gamma = 12 $ dB when $P_s=30$ dBm in Fig. 4(b)). Moreover, an approximately $10^5$ bits/Joule EE decreases when $\hat \gamma$ moves by approximately $4$ dB away from $\hat \gamma =15$ dB. This EE degradation is about $10\%$ (within the same order of magnitude) to the EE achieved by the maximum relay. When the second-hop CSI is not available, the target relay is suitable for scenarios in which the priority is EE maximization.

\begin{figure}[htbp]
\begin{center}
\includegraphics[width=2.95in]{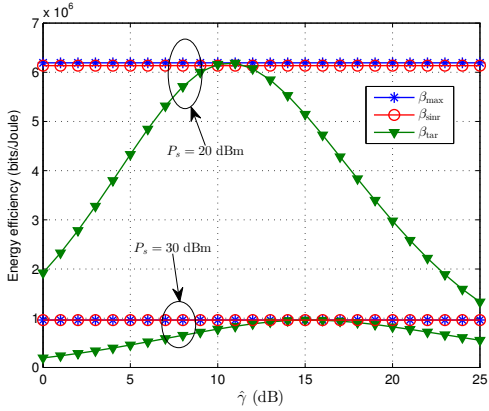}
\vspace{-0.15in}
\caption{EE versus $\hat \gamma$.}
\end{center}
\vspace{-0.2in} \label{fig:5}
\end{figure}


The outage probability versus the transmission power and average RSI channel gain is depicted in Fig. 6. In Fig. 6(a), we set $\sigma_0^2 = 0.1$ and $\hat \gamma$ increases from $0$ dB to $15$ dB with a step size of $0.6$ dB. Fig. 6(a) shows that the analytical and simulation results match well. As observed in Fig. 6(a), the outage probability gap between the maximum relay and SINR relay becomes larger when $P_s$ increases. The reason for this phenomenon is that the time of energy harvesting of the SINR relay is a little more longer  than that of the maximum relay, so that the corresponding e-SINR is more larger than that of the maximum relay in the region of high $P_s$. Note that the TS factor of the SINR relay is just a little larger than that of the maximum relay, which will not degrade the EE dramatically, as verified by the results of Fig. 7. In Fig. 6(b), we set $P_s = 30$ dBm and $\hat \gamma = 8$ dB. As observed in Fig. 6(b), the outage probability of the maximum relay varies very slowly when $\sigma_0^2$ is smaller than $0.1$.
However, when the value of $\sigma_0^2$ is smaller than $0.1$, the outage probability of the SINR relay and target relay decreases dramatically with a decreasing $\sigma_0^2$. Moreover, Fig. 6(b) also verifies that the SINR relay achieves the smallest outage probability.


\begin{figure}[tbp]
\vspace{-0in}
\begin{center}
\subfigure[Outage probability versus $P_s$. ]{\includegraphics[width=2.95in]{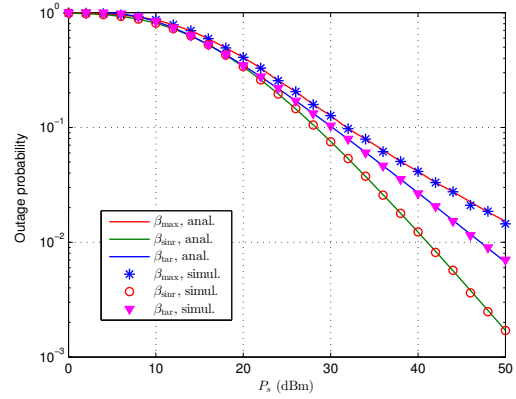}  \label{fig:6a} }
\vspace{-0.15in}
\subfigure[Outage probability versus $\sigma_0^2$. ]{\includegraphics[width=2.95in]{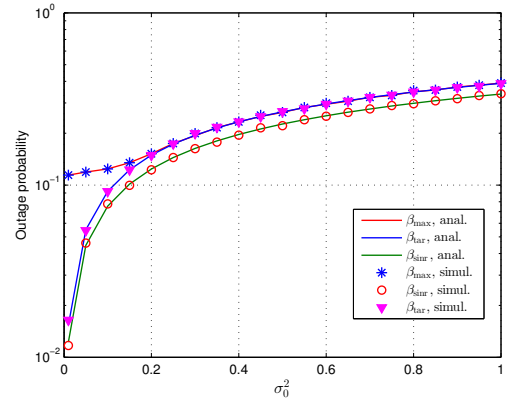} \label{fig:6b} }
\vspace{0.18in}
\caption{Outage probability in delay-limited transmission.}
\end{center}
\label{fig:6}
\vspace{-0.22in}
\end{figure}

Fig. 7 shows the EEs of all the three relay control schemes for delay-limited transmission, where $P_s$, $\sigma_0^2$, and $\hat \gamma$ are the same as those of Fig. 6. As a result, the SINR relay achieves a highest EE in the whole region of $P_s$. This result verifies that a larger e-SINR and a proper $\alpha_{\rm sinr} $ of the SINR relay outperforms the maximum relay in both the throughput and EE. Fig. 7(a) also shows that the EE gap between the SINR relay and maximum relay in the low region of $P_s$ is larger  than that in the high region of $P_s$. In addition, in the low region of $P_s$, the EEs of all the relay control schemes increase with the increasing of $P_s$, whereas when $P_s$ reaches a threshold, the EE begins to decrease with the increasing of $P_s$. As observed in Fig. 7(a), the larger EEs are obtained in the low region of $P_s$, e.g., from $5$ dBm to $15$ dBm.
Thus, a relatively high outage probability achieved in the low region of $P_s$ becomes preferable from the point view of EE. For example, Fig. 6(a) shows that the SINR relay achieves $P_{\rm out} = 0.8$ when $P_s = 10$ dB, whereas Fig. 7(a) shows that the SINR relay achieves the highest EE when $P_s = 10$ dB. Although the throughput is a monotonically non-decreasing function of $P_s$ (not shown in Fig. 7), the results of Fig. 7(a) reveal a different changing trend for the EE versus $P_s$.
Fig. 7(b) investigates the EE versus $\sigma_0^2$ with a middle transmission power, i.e., $P_s=30$ dBm. As depicted in Fig. 7(b), the SINR relay achieves the highest EE in a large region of $\sigma_0^2$. Only when $\sigma_0^2<0.03$, the EE of the SINR relay is just a little lower than that of the maximum relay.
Fig. 7 also shows that the target relay achieves a competitive EE performance. Although the EE of the target relay is lower than that of the maximum relay, the EE gap between the maximum relay and target relay decreases with the increasing of $P_s$.

\begin{figure}[htbp]
\begin{center}
\subfigure[EE versus $P_s$.]{\includegraphics[width=2.95in]{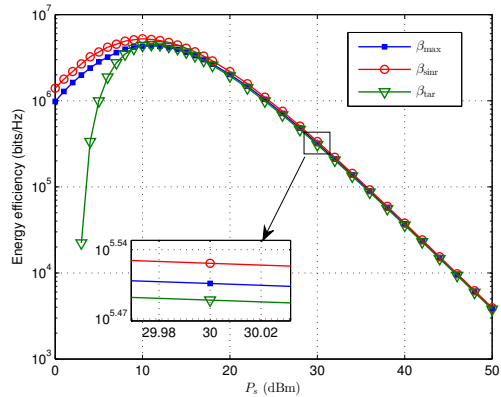}  \label{fig:7a} }
\vspace{-0.15in}
\subfigure[EE versus $\sigma_0^2$]{\includegraphics[width=2.95in]{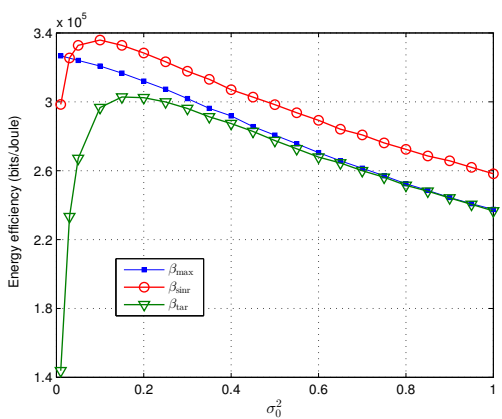} \label{fig:7b} }
\vspace{0.18in}
\caption{EE in delay-limited transmission.}
\end{center}
\label{fig:7}
\vspace{-0.22in}
\end{figure}

The ergodic capacities are examined in Fig. 8, where $\sigma_0^2 = 0.4$, $P_s$ and $\hat \gamma$ are set as same as those of Fig. 7. For the ergodic capacities of the maximum relay and SINR relay, the analytical expressions match well with the simulation results. The ergodic capacity achieved by the SINR relay is the highest among all the three relay control schems since the SINR relay always achieves the highest e-SINR. Note that the simulated ergodic capacity of the maximum relay is achieved with the bisection-obtained $\alpha$ and only statistics of CSI have been employed in the bisection searching. Fig. 8 also shows that the target relay achieves a competitive ergodic capacity compared with the maximum relay.

Fig. 9 examines the EEs for the delay-tolerant transmission. In Fig. 9, $P_s$, $\sigma_0^2$, and $\hat \gamma$ are set as same as those of Fig. 8. Fig. 9(a) shows that the EE gap between the maximum relay and SINR relay is very small. In the middle and high regions of $P_s$ ($P_s>4.5$ dBm), the SINR relay achieves an EE higher than that of the maximum relay. Moreover, in the low region of $P_s$ ($P_s < 4.5$ dBm), the SINR relay achieves an EE lower than that of the maximum relay. Fig. 9(a) also shows that the EEs of all the three relay control schemes monotonically increase with the decreasing of $P_s$ for the considered region of $P_s$. Note that the maximum EEs of all the three relay control schemes are obtained by further decreasing $P_s$ into the extremely small region (around $-20$ dBm), so that such a small transmission power is out of practical use. Fig. 9(b) shows that the EE of the maximum relay and SINR relay decrease with the increasing of $\sigma_0^2$. ~For the considered target e-SINR, Fig. 9 also shows that  the target relay achieves a competitive EE compared with the maximum relay.

\begin{figure}[htbp]
\begin{center}
\subfigure[Ergodic capacity versus $P_s$. ]{\includegraphics[width=2.95in]{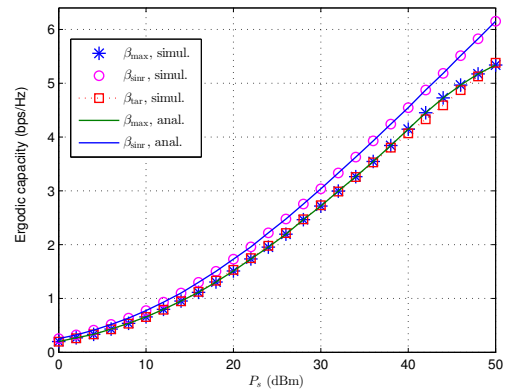}  \label{fig:8a} }
\vspace{-0.15in}
\subfigure[Ergodic capacity versus $\sigma_0^2$. ]{\includegraphics[width=2.95in]{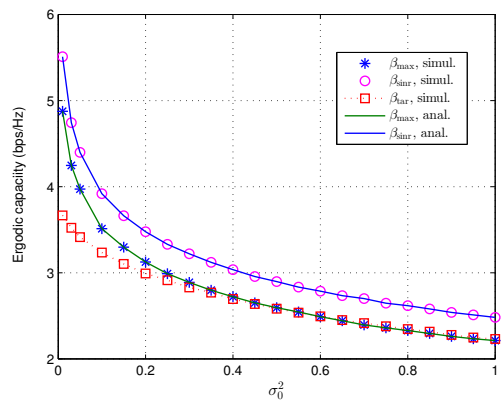} \label{fig:8b} }
\vspace{0.18in}
\caption{Ergodic capacity in delay-tolerant transmission.}
\end{center}
\label{fig:8}
\vspace{-0.22in}
\end{figure}

Fig. 10 investigates the impact of the relay position on the system performance. In Fig. 10, we set $P_s = 30$ dBm, $\sigma_0^2 = 0.1$, $d_3 = 20$ m, and $d_1 + d_2 = d_3$. As observed, the highest outage probabilities are achieved by all the three relay control schemes when $d_1/d_3 = 0.5$. When the relay moves away from the middle position, the outage probabilities decrease. For the delay-tolerant transmission, the lowest throughputs are obtained when the relay is placed midway. The reason for this is that the e-SINR satisfies $\gamma _{_{\rm SRD}} \propto (d_1 d_2)^{-m} $ and $\gamma _{_{\rm SRD}}$ achieves its smallest value when $d_1 = d_2$. Moreover, Fig. 10 shows that the lowest EE  is obtained when the relay is placed midway. Therefore, the relay should be placed closer to the source or destination to achieve a better system performance.


\begin{figure}[htbp]
\begin{center}
\subfigure[EE versus $P_s$.]{\includegraphics[width=2.95in]{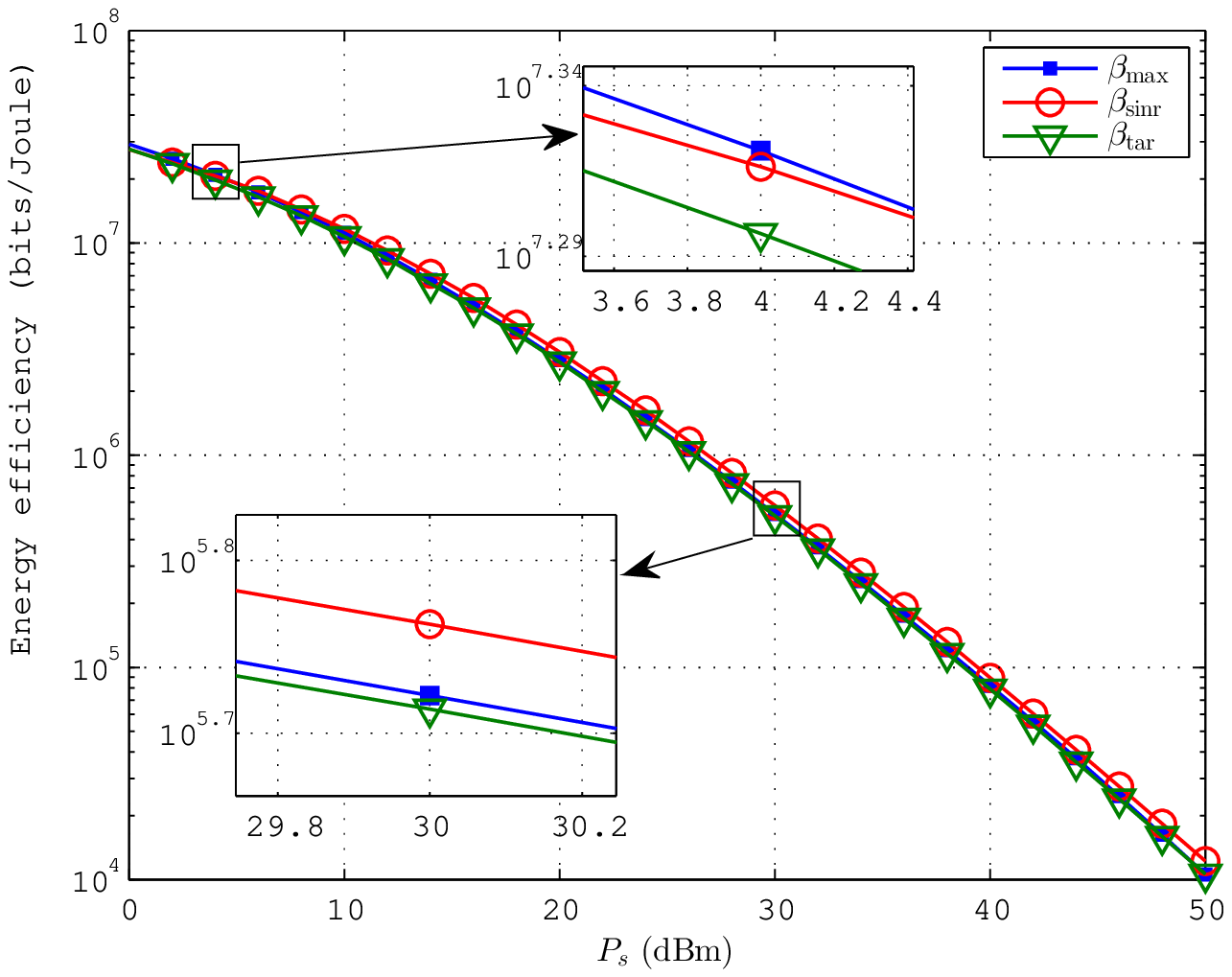}  \label{fig:9a} }
\vspace{-0.15in}
\subfigure[EE versus $\sigma_0^2$]{\includegraphics[width=2.95in]{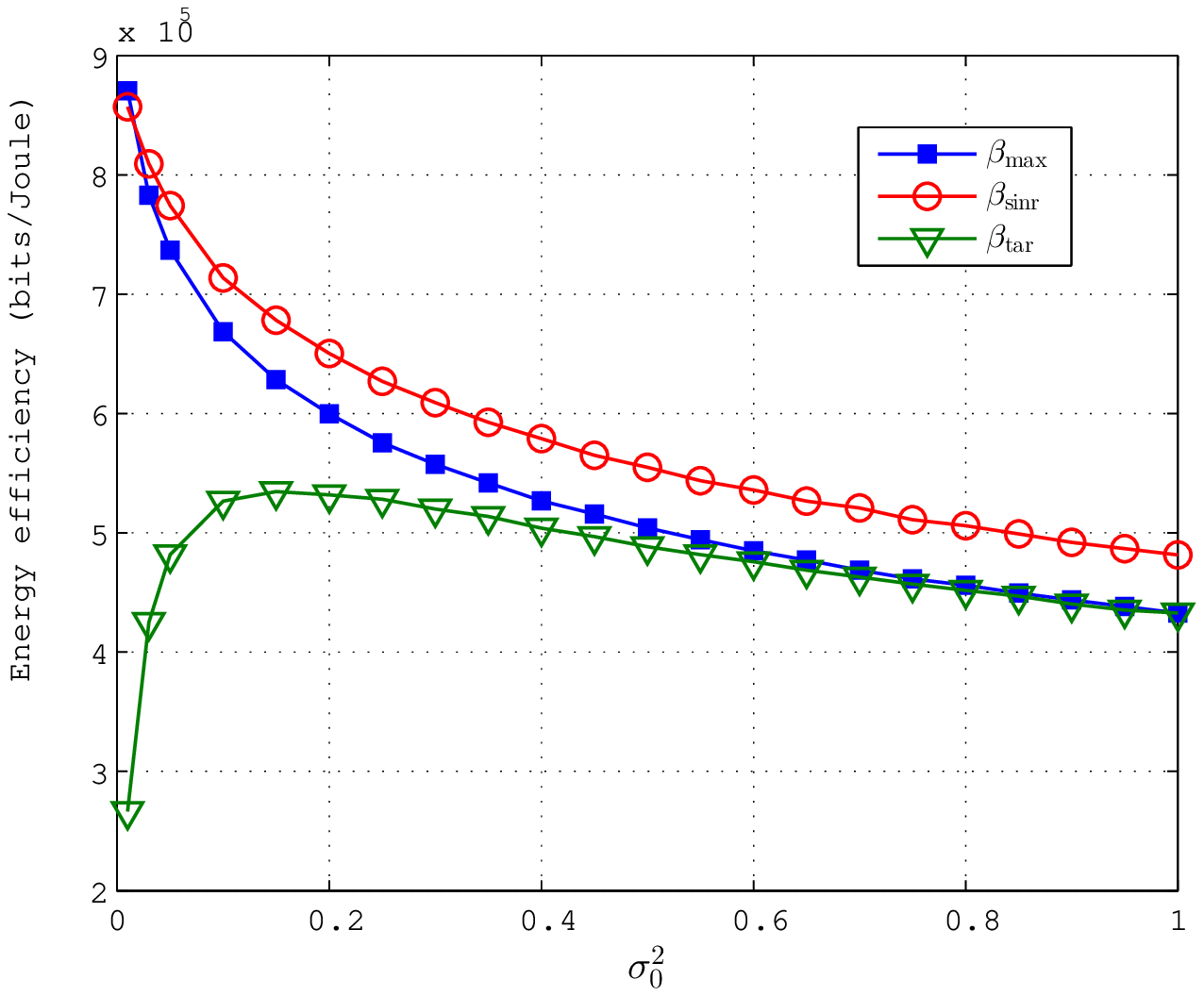} \label{fig:9b} }
\vspace{0.18in}
\caption{EE in delay-tolerant transmission.}
\end{center}
\label{fig:9}
\vspace{-0.22in}
\end{figure}

Fig. 11 compares the system performances of the direct transmission with and without SWIET FDR in the delay-limited transmission mode. In Fig. 11, we set $\sigma_0^2=0.1$, $d_1 = d_2 =10$ m, $d_3=20$ m, and consider $R = 2$ bps/Hz and $6$ bps/Hz, respectively. Since the EEs of all the three relay control schemes are at the same order of magnitude, we consider only the SINR relay to clearly show the curves. As observed in Fig. 11(a), the outage probability becomes very small when the direct link is available. In the low region of $P_s$, the outage probability of the direct transmission with SWIET FDR is almost the same as that of direct transmission. After $P_s$ increases to a threshold value, the outage probability of the direct transmission with SWIET FDR can be further decreased compared to that of the direct transmission. However, when $R$ increases from $2$ bps/Hz to $6$ bps/Hz, the outage probability decreasing requires a higher $P_s$, which results in a great EE decreasing, as depicted in Fig. 11(b). Since the outage probability of the direct transmission is already very small, Fig. 11(a) suggests that SWIET FDR is benefit in assisting the low-rate and high quality-of-service transmission. For example, when $P_s = 20$ dBm, the outage probability of the direct transmission is about $1.2 \times 10^{-6}$ for $R = 2$ bps/Hz, which can be further decreased to $4.0 \times 10^{-7}$ with the assistance of the SWIET FDR. Another important observation is that the EE of the direct transmission (with and without SWIET FDR) is monotonically decreasing when $P_s$ increases. As a result, the EE gap between the direct transmission and relaying transmission decreases when $P_s$ increases. For $R=2$ bps/Hz, the EE gap almost disappears when $P_s$ surpasses $30$ dBm. The reason for this phenomenon is that the contributions of different small outage probabilities (e.g., $P_{\rm out}<0.1$) to the EE become relatively small when these outage probabilities are obtained by the relatively large $P_s$s.
Therefore, it can be concluded that the SWIET FDR is an energy-effective scheme for the low-rate transmission when the direct link is not available.

\begin{figure}[htbp]
\begin{center}
\includegraphics[width=2.95in]{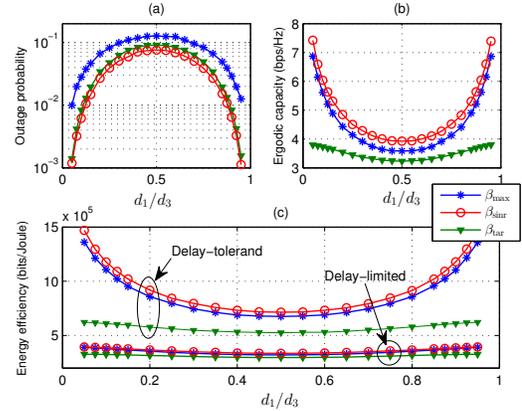}
\vspace{-0.15in}
\caption{Impact of relay position on system performance.}
\end{center}
\vspace{-0.15in}\label{fig:7}
\end{figure}

Fig. 12 compares the system performances of the direct transmission with and without SWIET FDR in the delay-tolerant transmission mode. In Fig. 12, the simulation parameters are the same as those of Fig. 11. As observed, the ergodic capacity of the direct transmission is much higher than that of relaying transmission. When $P_s = 30$ dBm, the ergodic capacity gap between the direct transmission and relaying transmission is about $17.7$ bps/Hz. More over, when SWIET FDR is employed to assist the direct transmission, the ergodic capacity almost keeps the same as that of the direct transmission, since the e-SINR enhancement due to SWIET FDR is relatively small compared to that of direct transmission. As a result, the EE of the relaying system assisted direct transmission is almost the same as that of the direct transmission. Thus, the performance enhancement for the delay-tolerant direct transmission by employing SWIET FDR is trivial.

\section{Conclusion}

This paper has studied three relay control schemes to improve EE of SWIET FDR systems. The EE maximization problem has been formulated as a concave fractional program of source transmission power, so that the three relay control schemes are designed separately. Under Rician fading RSI channel, analytical expressions of outage probability and ergodic capacity have been derived for the maximum relay, SINR relay, and target relay.
For the maximum relay, the EE maximization problem is shown to be a concave function with respect to TS factor, so that the optimized TS factor can be obtained by the bisection method. By employing instantaneous CSI, the SINR relay and target relay with collateral TS factors have gained advantages in improving EE. In delay-limited and delay-tolerant transmissions, numerical results show that the SINR relay achieves a better performance than that of the maximum relay in terms of outage probability, ergodic capacity, and EE, whereas the target relay achieves a competitive system performance without requiring CSI of the second-hop. It has shown that a worse outage performance may result in a higher EE in delay-limited transmission. In low-rate delay-limited transmissions, numerical results also show that SWIET FDR can achieve a competitive EE compared to source-destination direct transmission. When the relay is placed midway between the source and relay, the worst EE is achieved, as well as the outage probability and erogdic capacity.

\begin{figure}[htbp]
\begin{center}
\subfigure[Outage probability versus $P_s$. ]{\includegraphics[width=2.95in]{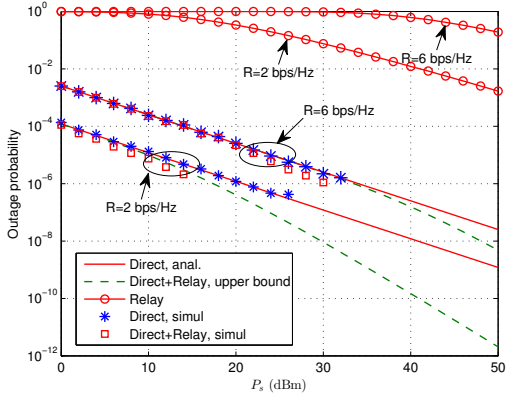}  \label{fig:11a} }
\vspace{-0.15in}
\subfigure[EE versus $P_s$. ]{\includegraphics[width=2.95in]{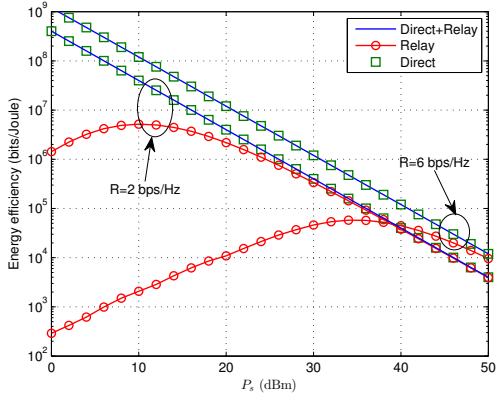} \label{fig:11b} }
\vspace{0.18in}
\caption{Performance in delay-limited direct transmission.}
\end{center}
\label{fig:11}
\vspace{-0.22in}
\end{figure}

\begin{figure}[htbp]
\begin{center}
\includegraphics[width=2.95in]{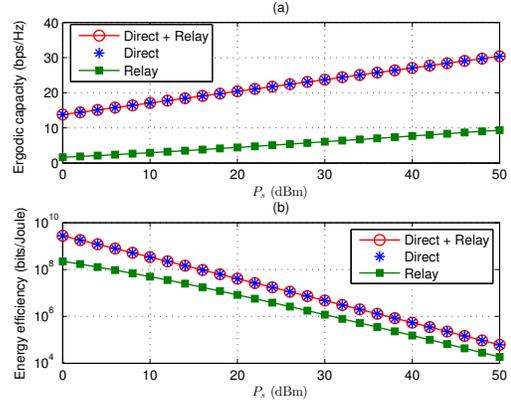}
\vspace{-0.15in}
\caption{Performance in delay-tolerant direct transmission.}
\end{center}
\vspace{-0.22in}\label{fig:7}
\end{figure}

\section*{Appendix A: A proof of Proposition 1}
\renewcommand{\theequation}{A.\arabic{equation}}
\setcounter{equation}{0}
Substituting \eqref{eq:SINR_max} into $P_{\rm out} = \Pr(\gamma _{\max}<\gamma _{\rm th})$, the outage probability is given by
\begin{eqnarray}
{P_{\rm out}} &\!\!\!\!=\!\!\!\!& \Pr \left( {\tfrac{{\mu {\gamma _{_{\rm SR}}}{{\gamma }_{_{\rm RD}}}}}{{{\gamma _{_{\rm SR}}} + \left( {\mu {\gamma _{\rm SR}}\mathcal{A}_r|h_0|^2 + 1} \right)\left( {\mu {{\gamma }_{\rm RD}} + 1} \right)}} < {\gamma _{\rm th} }} \right) \nonumber \\
&\!\!\!\!=\!\!\!\!& \Pr \left( {|h_2|^2 < \tfrac{{\bar a|h_1|^2 + b}}{{\bar c|h_1|^4 - d|h_1|^2}}} \right),
\end{eqnarray}
where $\bar a \triangleq \mathcal{L}_1 {P_s}d_1^md_2^m\sigma _d^2{\gamma _{\rm th} }(1 + \mu \mathcal{A}_r |h_0|^2)$, $b \triangleq d_1^{2m}d_2^m\sigma _r^2\sigma _d^2{\gamma _{\rm th} }$, $\bar c \triangleq \mathcal{L}_1^2 \mathcal{L}_2  P_s^2\mu (1 - \mu {\gamma _{\rm th} }\mathcal{A}_r|h_0|^2)$, and $d \triangleq \mathcal{L}_1 \mathcal{L}_2 {P_s}d_1^m\sigma _r^2\mu {\gamma  _{\rm th}}$. Given that the term $\bar c|h_1|^4-d|h_1|^2$ can be positive or negative and $|h_2|^2$ is always greater than a negative number, $P_{\rm out}$ can be simplified as
\begin{eqnarray}
{P_{{\rm{out}}}}  &\!\!\!\!\!\!=\!\!\!\!\!\!& \left\{\!\!\!\! {\begin{array}{*{20}{c}}
{\Pr \!\! \left( \!\! {|h_2|^2 \!<\! \frac{{\bar a|h_1|^2 + b}}{{\bar c|h_1|^4 - d|h_1|^2}}} \!\!\right),}\\
{1,}\\
{1,}
\end{array}} \right.\!\!\!\!\! \! \begin{array}{*{20}{c}}
{|h_0|^2 < {{{1} \over {\mu {\gamma _{{\rm{th}}}}}}}~{\rm and}~|h_1|^2 > \frac{d}{c}}\\
{|h_0|^2 < {{{1} \over {\mu {\gamma _{{\rm{th}}}}}}}~{\rm and}~|h_1|^2 < \frac{d}{c}}\\
{|h_0|^2 > {{{1} \over {\mu {\gamma _{{\rm{th}}}}}}} ~{\rm and}~|h_1|^2 > 0}
\end{array}\!\!\!.   \label{ap:P_out1} \nonumber
\end{eqnarray}
Denote the PDF of $|h_1|^2$ and the cumulative distribution function (CDF) $|h_2|^2$ by
${f}_{_{|h_1|^2}}(z)\triangleq \frac{1}{\lambda_1} e^{-\frac{z}{\lambda _1}}$   and $F_{_{|h_2|^2}}(z) \triangleq \Pr (|h_2|^2<z) = 1-e^{-\frac{z}{\lambda _{g}}}$, respectively, where $\lambda_1$ and $\lambda_2$ are the means of $|h_1|^2$ and $|h_2|^2$, respectively. Then, $P_{\rm out}$ can be evaluated as
\begin{eqnarray}
\!\!{P_{{\rm{out}}}} &\!\!\!\!\!=\!\!\!\!\!& \int\limits_{w=\frac{{1 }}{{\mu {\gamma _{{\rm{th}}}}}}}^\infty  {\int\limits_{z=0}^\infty  {{{f}_{_{|h_0|^2}}}(w){{f}_{_{|h_1|^2}}}(z)} } {\rm{d}}z{\rm{d}}w \nonumber \\
\!\!\!&\!\!\!\!\!\!\!\!\!\!& + \int\limits_{w=0}^{\frac{{1 }}{{\mu {\gamma _{{\rm{th}}}}}}} {\int\limits_{z = 0}^{\frac{d}{c}} {{{f}_{_{|h_0|^2}}}} (w){{f}_{_{|h_1|^2}}}(z){\rm{d}}z{\rm{d}}w} \nonumber \\
\!\!\!&\!\!\!\!\!\!\!\!\!\!& + \int\limits_{w=0}^{\frac{{1}}{{\mu {\gamma _{{\rm{th}}}}}}} {\int\limits_{z = \frac{d}{c}}^\infty  {{{f}_{_{|h_0|^2}}}} (w){{f}_{_{|h_1|^2}}}(z)\Pr \!\! \left( {|h_2|^2 < \tfrac{{az + b}}{{c{z^2} - dz}}} \right)\!{\rm{d}}z} {\rm{d}}w \nonumber \\
\!\!\!&\!\!\!\!\!=\!\!\!\!\!& 1 \!-\! \tfrac{1}{\lambda _1}\int\limits_{w=0}^{\frac{{1}}{{\mu {\gamma _{{\rm{th}}}}}}} {\int\limits_{z = \frac{d}{c}}^\infty \!\! {{ {f}_{_{|h_0|^2}}(w) e^{ - \left( {\frac{z}{{{\lambda _1}}} + \frac{{az + b}}{{(c{z^2} - dz){\lambda _2}}}} \right)}} {\rm{d}}z} } {\rm{d}}w, \label{ap:P_out}
\end{eqnarray}
where $a \triangleq \mathcal{L}_1 {P_s}d_1^md_2^m\sigma _d^2{\gamma _{\rm th} }(1 + \mu \mathcal{A}_r w)$, $c \triangleq \mathcal{L}_1^2 \mathcal{L}_2 P_s^2\mu (1 - \mu {\gamma _{\rm th} } \mathcal{A}_r w)$, and ${f}_{_{|h_0|^2}}(w)$ is the PDF of $|h_0|^2$ given by \eqref{eq:pdf_f}.
The analytical expression for $P_{\rm out}$ in \eqref{ap:P_out} cannot be further simplified. However,
at the high SNRs, $\frac{{az + b}}{{c{z^2} - dz}}$ has an approximation as $\frac{{a }}{{c{z} - d}}$, so that $P_{\rm out}$ can be approximated as
\begin{eqnarray}
\!{P_{{\rm{out}}}} &\!\!\!\!\!\!\!\!\!\!\!\! \approx \!\!\!\!\!&\! 1  - \frac{1}{{{\lambda _1}}}\int\limits_{w=0}^{{{{1} \over {\mu {\gamma _{{\rm{th}}}}}}}} {\int\limits_{z = {{d \over c}}}^\infty  {{ {{{f}_{_{|h_0|^2}}}} (w) e^{ - \left( {\frac{z}{{{\lambda _1}}} + \frac{a}{{(cz - d){\lambda _2}}}} \right)}} {\rm{d}}z} } {\rm{d}}w \nonumber \\
\!&\!\!\!\!\!\!\!\!\!\!\!\! \mathop  = \limits^{x \triangleq cz - d} \!\!\!\!\!&\! 1 - \frac{1}{{{c \lambda _1}}}\int\limits_{w=0}^{\frac{{1}}{{\mu {\gamma _{{\rm{th}}}}}}} {\int\limits_{x = 0}^\infty  {{ {{{f}_{_{|h_0|^2}}}} (w) e^{ - \left( {\frac{x}{{c{\lambda _1}}} + \frac{a}{{x{\lambda _2}}}} \right)}}{e^{ - \frac{d}{{c{\lambda _1}}}}} {\rm{d}}x} } {\rm{d}}w  \nonumber \\
\!&\!\!\!\!\!\!\!\!\!\!\!\!=\!\!\!\!\!&\! 1 -  \int\limits_{0}^{{{{1 } \over {\mu {\gamma _{{\rm{th}}}}}}}} { {{{f}_{_{|h_0|^2}}}} (w) \rho {K_1}(\rho ){e^{ - \frac{d}{{c{\lambda _1}}} }}} {\rm{d}}w, \label{ap:P_out_app}
\end{eqnarray}
where $\rho \triangleq \sqrt{\frac{4a}{c\lambda _1 \lambda _2}} $ and $K_1(\cdot)$ is the first-order modified Bessel function of the second kind \cite[Eq. (8.432)]{Table_Integrals}. The last equality in \eqref{ap:P_out_app} is obtained by applying $\int _0^\infty  {{e^{ - \frac{\alpha }{{4x}} - \beta x}}dx}  = \sqrt {\frac{\alpha }{\beta }} {K_1}(\sqrt {\alpha \beta } )$ \cite[Eq. (3.324.1)]{Table_Integrals}.

\section*{Appendix B: A proof of Proposition 2}
\renewcommand{\theequation}{B.\arabic{equation}}
\setcounter{equation}{0}
In the RSI dominated scenario, the ergodic capacity achieved by the maximum relay can be expressed as
\begin{eqnarray}
{C_{_{\rm E}}} &=& \mathbb{E}\{ {\log _2}(1 + \mu {\gamma _{_{\rm{RD}}}} + \mu ^2 \gamma _{_{\rm{RD}}} \mathcal{A}_r |h_0|^2 )\}  \nonumber \\
& & - \mathbb{E}\{ {\log _2}(1 + \mu ^2 \gamma _{_{\rm{RD}}} \mathcal{A}_r |h_0|^2 ) \} .  \label{ap:C_E_max}
\end{eqnarray}
Define $x \triangleq |h_1|^2|h_2|^2$, where the PDF of $x$ is given by $f(x) = \frac{2}{{{\lambda _1}{\lambda _2}}}{K_0}\left( {2\sqrt {{{x \over {{\lambda _1}{\lambda _2}}}}} } \right)$. Given $|h_0|^2$, the second term in the right hand side (RHS) of \eqref{ap:C_E_max} can be evaluated as
\begin{eqnarray}
\mathbb{E}\{ {\log _2}(1 +  \mu ^2 \gamma _{_{\rm{RD}}}\mathcal{A}_r |h_0|^2 ) |~ |h_0|^2 \} & & ~~~~~~~~~~~~~~~~~~~~~~~~~~~~~ \nonumber
\end{eqnarray}
\vspace{-0.25in}
\begin{eqnarray}
\!\!&\!\!\!\!\!=\!\!\!\!\! &  \tfrac{2}{{{\lambda _1}{\lambda _2}\ln 2}}\!\!\int\limits_0^\infty   {\ln \left( {1 + \tfrac{{\mathcal{L}_1 \mathcal{L}_2 \mu^2 {P_s}\mathcal{A}_r|h_0|^2 x}}{{d_1^m d_2^m \sigma _d^2}}} \right)\!{K_0}\!\left( {2\sqrt {\tfrac{x}{{{\lambda _1}{\lambda _2}}}} } \right){\rm{d}}x} \nonumber \\
\!\!&\!\!\!\!\!\mathop  = \limits^{({\rm a})}\!\!\!\!\!& \tfrac{2}{{{\lambda _1}{\lambda _2}\ln 2}}\!\!\int\limits_0^\infty  {G_{2,2}^{1,2}\left( {\left. {\tfrac{{\mathcal{L}_1 \mathcal{L}_2 \mu^2 {P_s} \mathcal{A}_r |h_0|^2 x}}{{d_1^md_2^m\sigma _d^2}}} \right|{}_{1,0}^{1,1}} \right)\!{K_0}\!\left( {2\sqrt {\tfrac{x}{{{\lambda _1}{\lambda _2}}}} } \right){\rm{d}}x} \nonumber \\
\!\!&\!\!\!\!\!\mathop  = \limits^{({\rm b})} \!\!\!\!\!& \tfrac{1}{{\ln 2}}G_{4,2}^{1,4}\left( {\left. {\tfrac{{\mathcal{L}_1 \mathcal{L}_2  \mu^2 {P_s}{\lambda _1}{\lambda _2} \mathcal{A}_r |h_0|^2}}{{d_1^md_2^m\sigma _d^2}}} \right|{}_{1,0}^{0,0,1,1}} \right), \label{ap:C_E_max_item1}
\end{eqnarray}
where $G_{m,n}^{p,q}(x)$ is the Meijer G-function \cite[Eq. (9.301)]{Table_Integrals}.
In \eqref{ap:C_E_max_item1}, we have used the relationship \cite[Eq. (8.4.6.5)]{PBM:90:Book:v3} in the step (a) and the integral identity \cite[Eq. (7.821.3)]{Table_Integrals} in the step (b), respectively.
Then, the second term in the RHS of \eqref{ap:C_E_max} can be expressed as
\begin{eqnarray}
\mathbb{E}\{ {\log _2}(1 +  \mu ^2 \gamma _{_{\rm{RD}}} \mathcal{A}_r |h_0|^2 ) \} & & ~~~~~~~~~~~~~~~~~~~~~~~~~~~~~ \nonumber
\end{eqnarray}
\vspace{-0.25in}
\begin{eqnarray}
&\!\!\!\!\!=\!\!\!\!\!& \tfrac{1}{{\ln 2}} \int\limits_0^\infty f_{_{|h_0|^2}}(t) G_{4,2}^{1,4}\left( {\left. {\tfrac{{\mathcal{L}_1 \mathcal{L}_2 \mu^2 {P_s}{\lambda _1}{\lambda _2} \mathcal{A}_r t}}{{d_1^md_2^m\sigma _d^2}}} \right|{}_{1,0}^{0,0,1,1}} \right) {\rm d}t \nonumber \\
&\!\!\!\!\! \mathop  = \limits^{(\rm a)} \!\!\!\!\!&
\tfrac{e^{-K}}{K \ln{2}}\sum\limits_{n = 0}^\infty  {\tfrac{{{{\left( {\frac{{d_1^md_2^m\sigma _d^2K(K + 1)}}{{\mathcal{L}_1 \mathcal{L}_2 {\mu ^2}{P_s}{\lambda _1}{\lambda _2} \mathcal{A}_r \sigma _0^2}}} \right)}^{n + 1}}}}{{{{(n!)}^2}}}} \nonumber \\
&\!\!\!\!\! \!\!\!\!\!& \times
 G_{1,4}^{4,1}\left( {\left. {{\textstyle{{d_1^md_2^m\sigma _d^2(K + 1)} \over {\mathcal{L}_1 \mathcal{L}_2 {\mu ^2}{P_s}{\lambda _1}{\lambda _2} \mathcal{A}_r \sigma _0^2}}}} \right|{}_{0, - 1 - n, - 1 - n, - n}^{ - 1 - n}} \right),
\end{eqnarray}
where the infinite-series representation of $I_0(z)$ \cite[Eq. (8.447.1)]{Table_Integrals} has been applied in the step (a).
Now, the first term in the RHS of \eqref{ap:C_E_max} can be shown similarly as
\begin{eqnarray}
\mathbb{E}\{ {\log _2}(1 +  \mu \gamma _{_{\rm{RD}}} +  \mu ^2 \gamma _{_{\rm{RD}}} \mathcal{A}_r |h_0|^2 ) \} & & ~~~~~~~~~~~~~~~~~~~~~~~~ \nonumber
\end{eqnarray}
\vspace{-0.25in}
\begin{eqnarray}
&\!\!\!\!\!=\!\!\!\!\!& \tfrac{1}{{\ln 2}} \! \int\limits_0^\infty \! f_{_{|h_0|^2}}(w) G_{4,2}^{1,4}\left( {\left. {\tfrac{{\mathcal{L}_1 \mathcal{L}_2  \mu {P_s}{\lambda _1}{\lambda _2} (1+\mu \mathcal{A}_r w)}}{{d_1^md_2^m\sigma _d^2}}} \right|{}_{1,0}^{0,0,1,1}} \right) {\rm d}w. \nonumber
\end{eqnarray}
Then, the desired result follows immediately.

\section*{Appendix C: A proof of Proposition 3}
\renewcommand{\theequation}{C.\arabic{equation}}
\setcounter{equation}{0}
Substituting \eqref{eq:SINR_opt} into $P_{\rm out} = \Pr(\gamma _{\rm sinr}<\gamma _{\rm th} )$, the outage probability is given by $P_{\rm out} = $
\begin{eqnarray}
 \left\{\!\!\!\! {\begin{array}{*{20}{c}}
  {\Pr \!\! \left( \! {|{h_2}{|^2} \!<\! \tfrac{{\bar a|{h_1}{|^2} + \bar b + \bar c\sqrt {|{h_1}{|^2}(\mathcal{L}_1{P_s}|{h_1}{|^2} + d)} }}{{\mathcal{L}_2{{(\mathcal{L}_1{P_s}|{h_1}{|^2} - {\gamma _{{\rm{th}}}}d)}^2}}}} \! \right)\!\!,}& \!\! {|{h_1}{|^2} \!>\! \tfrac{{d_1^m{\gamma _{{\rm{th}}}}\sigma _r^2}}{{\mathcal{L}_1{P_s}}}} \\
  {1,}& \!\! {|{h_1}{|^2} \!<\! \frac{{d_1^m{\gamma _{{\rm{th}}}}\sigma _r^2}}{{\mathcal{L}_1{P_s}}}}
\end{array}} \right.\!\!\!\!,  \nonumber \label{ap:P_out_opt}
\end{eqnarray}
where $\bar a \triangleq \mathcal{L}_1 {P_s} \mathcal{A}_r |h_0|^2d_1^md_2^m\sigma _d^2{\gamma  _{\rm th}} ( {1 + 2{\gamma _{\rm th} }} )$, $\bar b \triangleq  \mathcal{A}_r |h_0|^2d_1^{2m} d_2^m\sigma _r^2\sigma _d^2\gamma  _{\rm th}^2$, $\bar c = 2\mathcal{A}_r |h_0|^2d_1^md_2^m\sigma _d^2{\gamma _{\rm th} }$ $\sqrt {\mathcal{L}_1 {P_s}(1 + {\gamma _{\rm th} }){\gamma _{\rm th} }} $, and $d \triangleq d_1^{m}\sigma _r^2$.
Substituting ${f}_{_{|h_1|^2}}(z) \triangleq \frac{1}{\lambda_1} e^{-\frac{z}{\lambda _1}}$, ${f}_{_{|h_0|^2}} (z)$ of \eqref{eq:pdf_f}, and $F_{_{|h_2|^2}}(z) \triangleq $ $\Pr (|h_2|^2<z) = 1-e^{-\frac{z}{\lambda _{2}}}$ into the above expression of $P_{\rm out}$, the outage probability can be written as
\begin{eqnarray}
\!{P_{{\rm{out}}}} &\!\!\!\!\!=\!\!\!\!\!& 1 \!-\! \tfrac{1}{{{\lambda _1}}} \!\!\!\!\!\!\!\! \int\limits_{{\frac{{d_1^m{\gamma _{{\rm{th}}}}\sigma _r^2}}{{\mathcal{L}_1{P_s}}}}}^\infty \!\!\! {\int\limits _0^\infty \!\!  {{{f}_{_{|h_0|^2}} (w) e^{ - \frac{z}{{{\lambda _1}}} \!-\! \frac{{w\left( {az + b + c\sqrt {z(\mathcal{L}_1 {P_s}z + d)} } \right)}}{{{{\mathcal{L}_2 {\lambda _2} (\mathcal{L}_1{P_s}z - {\gamma _{{\rm{th}}}}d)}^2}}}}}{\rm{d}}w} } {\rm{d}}z \nonumber \\
\! &\!\!\!\!\!=\!\!\!\!\!& 1 \!-\! \tfrac{1}{{{\lambda _1}}} \!\!\!\! \int\limits_{\frac{{d_1^m{\gamma _{{\rm{th}}}}\sigma _r^2}}{{\mathcal{L}_1{P_s}}}}^\infty \!\!\!\! {\tfrac{{1 + K}}{{1 + K + \mathcal{A}_r \sigma _0^2\rho }}{e^{ - \frac{z}{{{\lambda _1}}} - \frac{{K\mathcal{A}_r\sigma _0^2\rho }}{{1 + K + \mathcal{A}_r\sigma _0^2\rho }}}}} {\rm{d}}z, \label{ap:P_out_opt_2}
\end{eqnarray}
where $a \triangleq \mathcal{L}_1 {P_s} d_1^md_2^m\sigma _d^2{\gamma  _{\rm th}} \mathcal{A}_r {1 + 2{\gamma _{\rm th} }} )$, $b \triangleq d_1^{2m}d_2^m\sigma _r^2\sigma _d^2$ $\gamma  _{\rm th}^2 \mathcal{A}_r $, $c \triangleq 2d_1^md_2^m\sigma _d^2{\gamma _{\rm th} \mathcal{A}_r }\sqrt {\mathcal{L}_1 {P_s}(1 + {\gamma _{\rm th} }){\gamma _{\rm th} }} $, and
\begin{eqnarray}
\rho  = \tfrac{{{\gamma _{{\rm{th}}}} + \frac{{{\mathcal{L}_1 }{P_s}z(1 + 2{\gamma _{{\rm{th}}}})}}{{d_1^m\sigma _r^2}} + 2\sqrt {{\gamma _{{\rm{th}}}}(1 + {\gamma _{{\rm{th}}}})\frac{{{\mathcal{L}_1}{P_s}z}}{{d_1^m\sigma _r^2}}\left( {1 + \frac{{{\mathcal{L}_1 }{P_s}z}}{{d_1^m\sigma _r^2}}} \right)} }}{{\frac{{{\mathcal{L}_2 }{\lambda _2}{{(\mathcal{L}_1 {P_s}z - d_1^m\sigma _r^2{\gamma _{{\rm{th}}}})}^2}}}{{d_1^{2m}d_2^m\sigma _r^2\sigma _d^2{\gamma _{{\rm{th}}}}}}}}.  \label{ap:rho}
\end{eqnarray}
This proves Proposition 3.

\section*{Appendix D: A proof of Proposition 4}
\renewcommand{\theequation}{D.\arabic{equation}}
\setcounter{equation}{0}

The ergodic capacity can be evaluated by ${C_{_{\rm E}}} = \int\nolimits_0^\infty  {{f_{{\gamma _{{\rm{sinr}}}}}}(\gamma ){{\log }_2}(1 + \gamma )} {\rm{d}}\gamma $, where ${f_{{\gamma _{{\rm{sinr}}}}}}(\gamma )$ is the PDF of $\gamma _{\rm sinr}$.
By replacing $\gamma _{\rm th}$ with $\gamma$ in \eqref{ap:P_out_opt_2}, the CDF of $\gamma _{\rm sinr}$ can be obtained. Then, the PDF of $\gamma _{\rm sinr}$ can be computed by ${f_{{\gamma _{{\rm{sinr}}}}}}(\gamma )= \tfrac{{\partial {F_{{\gamma _{{\rm{sinr}}}}}}(\gamma )}}{{\partial \gamma }}$. However, the complicated expression of $\rho$ in \eqref{ap:rho} leads to a huge expression for ${f_{{\gamma _{{\rm{sinr}}}}}}(\gamma )$ such that the numerical evaluation of $C_{_{\rm E}}$ becomes difficult.
Fortunately, a high SNR  approximation can be applied to simplify the expression. By using the similar procedure as in \eqref{ap:P_out_app}, the PDF of $\gamma _{\rm sinr}$ can be computed as
\begin{eqnarray}
{f_{{\gamma _{{\rm{sinr}}}}}}(\gamma ) =  - \int\nolimits_0^\infty {f_{|h_0|^2}}(w)  \rho {\rm{d}}w, \label{ap:C_E_sub}
\end{eqnarray}
where $\rho \triangleq \tfrac{{\partial  {{e^{ - \frac{{2d_1^m\sigma _r^2\gamma }}{{{P_s}{\lambda _1}}}}}v{K_1}(v)} }}{{\partial \gamma }}$ and $v$ is given by
\begin{eqnarray}
v  \triangleq   2 \sqrt {\tfrac{{\mathcal{A}_r wd_1^md_2^m\sigma _d^2\gamma \left( {2\gamma  + {\text{1}} + 2\sqrt {r(r + 1)} } \right)}}{{\mathcal{L}_1 \mathcal{L}_2 {P_s}{\lambda _1}{\lambda _2}}}}. \nonumber
\end{eqnarray}
Then, the ergodic capacity is given by
\begin{eqnarray}
{C_{\rm E}} \!=\! - \int\nolimits_0^\infty \!\! {\int\nolimits_0^\infty \!\! {f_{|h_0|^2}}(w) \rho {{\log }_2}(1 + \gamma )} {\rm{d}}w{\rm{d}}\gamma, \label{ap:ce_opt}
\end{eqnarray}
where the term $\rho$ can be further evaluated as
\begin{eqnarray}
  \rho =  - \frac{{2d_1^m\sigma _r^2v{K_1}(v)}}{{{P_s}{\lambda _1}}}{e^{ - \frac{{2d_1^m\sigma _r^2\gamma }}{{{P_s}{\lambda _1}}}}} + {e^{ - \frac{{2d_1^m\sigma _r^2\gamma }}{{{P_s}{\lambda _1}}}}}\tfrac{{\partial [v{K_1}(v)]}}{{\partial \gamma }}  \nonumber \\
   = v{e^{ - \tfrac{{2d_1^m\sigma _r^2\gamma }}{{{P_s}{\lambda _1}}}}}\left( {\tfrac{{2d_1^m\sigma _r^2{K_1}(v)}}{{{P_s}{\lambda _1}}} + \tfrac{{v{K_0}(v)}}{{2\gamma (\gamma  + 1 - \sqrt {\gamma (\gamma  + 1)} )}}} \right).
\end{eqnarray}

\section*{Appendix E: A proof of Corollary 1}
\renewcommand{\theequation}{E.\arabic{equation}}
\setcounter{equation}{0}

The outage probability of multiple relays assisted transmission can be written as
\begin{eqnarray}
P_{\rm out}^{_{(\rm SD+SRD)}}=\left\{  {\begin{array}{*{20}{c}}
  {\Pr   \left( {\gamma _{_{\rm{SD}}}} < {\gamma _{{\rm{th}}}} -   {\gamma _{_{{\rm{SRD}}}}} \right) ,}&{   {\gamma _{_{{\rm{SRD}} }}} < {\gamma _{{\rm{th}}}}} \\
  {0,}&{  {\gamma _{_{{\rm{SRD}}}}} > {\gamma _{{\rm{th}}}}}
\end{array}} \right.\!\!.  \label{ap:P_out_sd_srd_0}
\end{eqnarray}
Denoting the CDF and complementary CDF of $|h_3|^2$ by $F_{_{|h_3|^2}} (x)$ and $\bar F_{_{|h_3|^2}} (x)$, respectively, and denoting the PDF of $ {\gamma _{_{{\rm{SRD}} }}}$ by $f_{\gamma _{_{{\rm{SRD}} }}}(\gamma)$, the outage probability in \eqref{ap:P_out_sd_srd_0} can be expressed as
\begin{eqnarray}
P_{\rm out}^{_{(\rm SD+SRD)}} &\!\!=\!\!& \int\nolimits_0^{{\gamma _{{\rm{th}}}}} F_{_{|h_3|^2}} (\gamma_{\rm th} - \gamma)   f_{\gamma _{_{{\rm{SRD}} }}}(\gamma) {\rm{d}}\gamma \nonumber \\
&=&  P_{\rm out}^{_{(\rm SRD)}} - \int\nolimits_0^{{\gamma _{{\rm{th}}}}} \bar F_{_{|h_3|^2}} (\gamma_{\rm th} - \gamma) f_{\gamma _{_{{\rm{SRD}} }} }(\gamma) {\rm{d}}\gamma  \nonumber \\
& \mathop  = \limits^{(a)}  & P_{\rm out}^{_{(\rm SRD)}} \left(1-  \bar F_{_{|h_3|^2}} (\gamma_{\rm th} - \gamma_c) \right) \nonumber \\
&<& P_{\rm out}^{_{(\rm SRD)}} F_{_{|h_3|^2}} (\gamma_{\rm th}) \nonumber \\
&=& P_{\rm out}^{_{(\rm SD)}} P_{\rm out}^{_{(\rm SRD)}}, \label{ap:P_out_sd_srd}
\end{eqnarray}
where $P_{\rm out}^{_{(\rm SRD)}}=\int\nolimits_0^{\gamma_{\rm th}} f_{\gamma _{_{{\rm{SRD}} }}}(\gamma) {\rm{d}}\gamma$ is the outage probability achieved by the SWIET FDR and the constant $\gamma _{\rm c} \in [0, \gamma_{\rm th}]$. In the step (a) of \eqref{ap:P_out_sd_srd}, we have applied the weighted mean value theorem \cite[Theorem 3.16]{Calculus_vol1}.

\balance

\bibliography{IEEEabrv,IEEE_bib}


\end{document}